\newtheorem{theorem}{Theorem}
\newtheorem{proposition}[theorem]{Proposition}
\newtheorem{remark}[theorem]{Remark}
\setlist[enumerate]{topsep=0pt,itemsep=-1ex,partopsep=1ex,parsep=1ex,leftmargin=4ex}
\setlist[itemize]{topsep=0pt,itemsep=-1ex,partopsep=1ex,parsep=1ex,leftmargin=4ex}
\begin{document}
\title{Optimizing Age of Information without Knowing the Age of Information}
% \title{Minimizing Age of Information in Wireless Network without AoI Feedback}
% \title{Minimizing Age of Information without AoI Feedback: An Estimation Approach}
% \title{Minimizing Age of Information in Wireless Network without AoI Interaction}

\author{Zhuoyi Zhao and Igor Kadota% <-this % stops a space
\IEEEcompsocitemizethanks{\IEEEcompsocthanksitem Zhuoyi Zhao and Igor Kadota are with the Department of Electrical and Computer Engineering, Northwestern University, USA. E-mail: zhuoyizhao2025@u.northwestern.edu and kadota@northwestern.edu. 
}% <-this % stops an unwanted space} 
}
% \and
% \IEEEauthorblockN{Homer Simpson}
% \IEEEauthorblockA{Twentieth Century Fox\\
% Springfield, USA\\
% Email: homer@thesimpsons.com}
% \and
% \IEEEauthorblockN{James Kirk\\ and Montgomery Scott}
% \IEEEauthorblockA{Starfleet Academy\\
% San Francisco, California 96678--2391\\
% Telephone: (800) 555--1212\\
% Fax: (888) 555--1212}}

% conference papers do not typically use \thanks and this command
% is locked out in conference mode. If really needed, such as for
% the acknowledgment of grants, issue a \IEEEoverridecommandlockouts
% after \documentclass

% for over three affiliations, or if they all won't fit within the width
% of the page, use this alternative format:
% 
%\author{\IEEEauthorblockN{Michael Shell\IEEEauthorrefmark{1},
%Homer Simpson\IEEEauthorrefmark{2},
%James Kirk\IEEEauthorrefmark{3}, 
%Montgomery Scott\IEEEauthorrefmark{3} and
%Eldon Tyrell\IEEEauthorrefmark{4}}
%\IEEEauthorblockA{\IEEEauthorrefmark{1}School of Electrical and Computer Engineering\\
%Georgia Institute of Technology,
%Atlanta, Georgia 30332--0250\\ Email: see http://www.michaelshell.org/contact.html}
%\IEEEauthorblockA{\IEEEauthorrefmark{2}Twentieth Century Fox, Springfield, USA\\
%Email: homer@thesimpsons.com}
%\IEEEauthorblockA{\IEEEauthorrefmark{3}Starfleet Academy, San Francisco, California 96678-2391\\
%Telephone: (800) 555--1212, Fax: (888) 555--1212}
%\IEEEauthorblockA{\IEEEauthorrefmark{4}Tyrell Inc., 123 Replicant Street, Los Angeles, California 90210--4321}}

% use for special paper notices
%\IEEEspecialpapernotice{(Invited Paper)}

% make the title area
\maketitle

% As a general rule, do not put math, special symbols or citations
% in the abstract
\begin{abstract}
Consider a network where a wireless base station (BS) connects multiple source-destination pairs. Packets from each source are generated according to a renewal process and are enqueued in a single-packet queue that stores only the freshest packet. The BS decides, at each time slot, which sources to schedule. Selected sources transmit their packet to the BS via unreliable links. Successfully received packets are forwarded to corresponding destinations. The connection between the BS and destinations is assumed unreliable and delayed. Information freshness is captured by the Age of Information (AoI) metric. The objective of the scheduling decisions is leveraging the delayed and unreliable AoI knowledge to keep the information fresh.

In this paper, we derive a lower bound on the achievable AoI by any scheduling policy. Then, we develop an optimal randomized policy for any packet generation processes. Next, we develop minimum mean square error estimators of the AoI and system times, and a Max-Weight Policy that leverages these estimators. We evaluate the AoI of the Optimal Randomized Policy and the Max-Weight Policy both analytically and through simulations. The numerical results suggest that the Max-Weight Policy with estimation outperforms the Optimal Randomized Policy even when the BS has no AoI knowledge.

\emph{Index terms} - Age of Information, Scheduling, Wireless Networks, Optimization 
\end{abstract}

\IEEEpeerreviewmaketitle

\section{Introduction}\label{sec:intro}

The Age of Information (AoI) metric has been receiving significant attention in the literature~\cite{kaul2011minimizing,dayal2023adaptive,AoI_V,AoIUAV1,9637803,yu2022age,beytur2020towards,abd2019role}
% ~\cite{AoIM/M/1,AoIsurvey,sun2017update,kaul2011minimizing,dayal2023adaptive,AoI_V,AoIUAV1,9637803,yu2022age,beytur2020towards,abd2019role,kadota2018scheduling,kadota2019scheduling,AoIDQN,AoIopenloop,fountoulakis2023scheduling,Whittle_Zhifeng,kadota2019minimizing,zakeri2023minimizing,saurav2023scheduling,talak2018optimizing,Vishrant_Multihop,bedewy2016optimizing,tang2020minimizing,arafa2017age,AoIbasicImplementation,WiFresh,AoIOFDMA,AoI5G,li2024aequitas,liu2023optimizing,liu2024optimizing,ji2024age,tripathi2019age,ayan2019age}
due to its relevance for emerging time-sensitive applications such as connected autonomous vehicles~\cite{kaul2011minimizing,dayal2023adaptive,AoI_V}, cooperative UAV swarms~\cite{AoIUAV1,9637803}, and the Internet-of-Things~\cite{yu2022age,beytur2020towards,abd2019role}.
%With the increasing demand for time-sensitive applications such as vehicular networks, UAV networks, and wireless sensor networks, the AoI has emerged as a critical metric of interest \cite{AoIM/M/1,AoIsurvey,Update or wait,kadota2018scheduling,AoIDQN,AoIopenloop,kadota2019minimizing,saurav2023scheduling,talak2018optimizing,AoIOFDMA,AoIbasicImplementation,WiFresh,AoI5G,li2024aequitas,liu2023optimizing,liu2024optimizing,ji2024age,tripathi2019age}. 
The AoI captures the freshness of information from the destination's perspective.
Specifically, the AoI measures the time elapsed since the generation of the latest information received by the destination. 
Consider a system in which information is timestamped upon generation.
%In systems where packets are timestamped upon packet generation, 
A higher timestamp indicates fresher information. 
Let $\tau^D(t)$ be the timestamp of the freshest packet received by the destination at time~$t$. 
The AoI is defined as $h(t) := t - \tau^D(t)$. 
While the destination does not receive a fresher packet, %When no new packets are received by the destination, 
$h(t)$ increases linearly, indicating that the information is becoming stale. 
When the destination receives a fresher packet, the timestamp $\tau^D(t)$ is updated, reducing the value of AoI $h(t)$. %, indicating that the information at the destination is fresher. 
%Notice that this AoI reduction is equal to the difference between the $\tau^D(t)$ and the timestamp of the received packet.

In this paper, we consider a network with multiple sources-destination pairs sharing time-sensitive information via a wireless base station (BS), as illustrated in Fig.~\ref{fig:Systemmodel}. 
Sources generate packets according to renewal processes. 
At every decision time~$t$, the transmission scheduling algorithm (running at the BS) selects a subset of sources for transmission over a shared and unreliable wireless channel. 
Packets that are successfully received by the BS are forwarded to the corresponding destinations. 
The connection between BS and destinations is unreliable and delayed. 
Our objective is to develop scheduling policies that attempt to keep the information at every destination as fresh as possible, i.e., that minimize AoI in the network. 

\begin{figure}[t]
    \centering
    \includegraphics[width=\columnwidth]{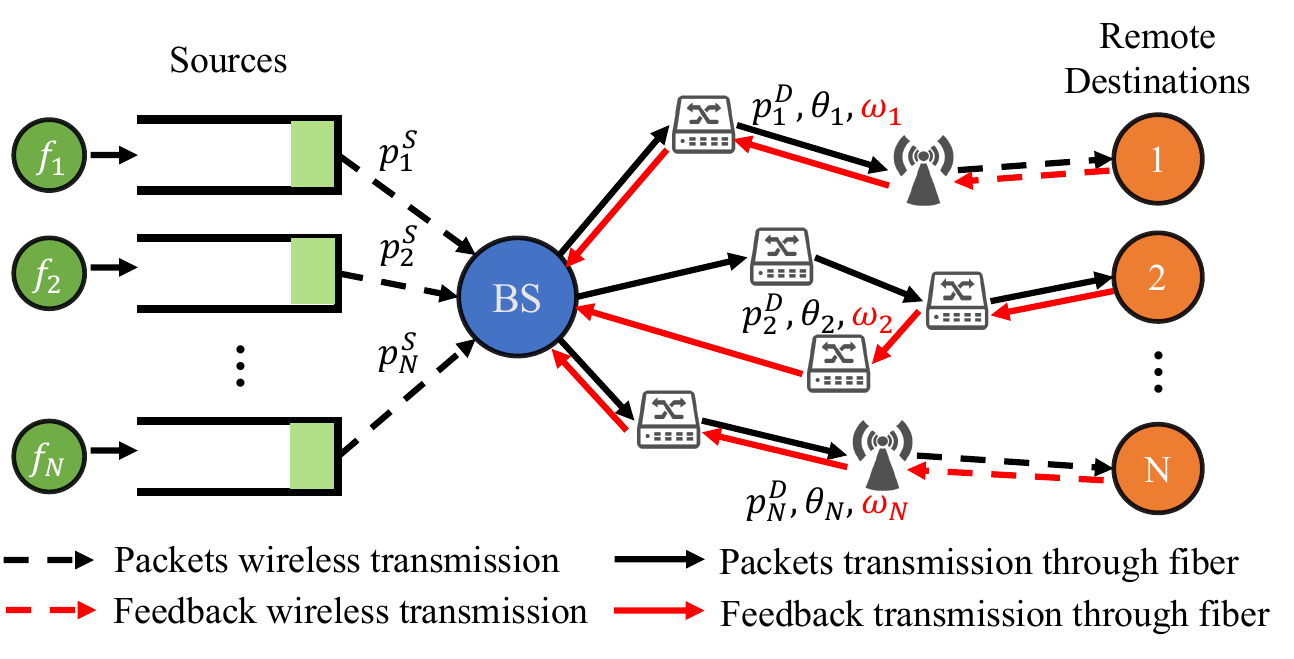}
\vspace{-1.5\baselineskip}
    \caption{Network with $N$ source-destination pairs sharing time-sensitive information via a wireless base station (BS). The BS selects $K$ sources at every decision time~$t$. Each selected source~$i$ transmits a packet to the BS through a wireless channel with reliability $p_i^S$. The BS forwards all successfully received packets to their corresponding destinations through a heterogeneous multi-hop network with transmission reliability $p_i^D$, transmission delay $\theta_i$, and feedback delay $\omega_i$. The delayed feedback informs the BS whether or not the packet was successfully received by the destination.}
    \label{fig:Systemmodel}
    \vspace{-1.5\baselineskip}
\end{figure}

\vspace{0.5ex}
\noindent\textbf{Challenge.} Timestamps are generated at the %Application layer of the 
sources and decoded/decrypted at the %Application layer of the 
destinations. %Timestamps may be encrypted at the source, and the BS may or may not have 
Knowledge of the ``\emph{destination timestamps}'' $\tau^D(t)$, namely timestamps of the packets successfully received at the destinations, allows the scheduling policy to know which destinations have outdated information, i.e., large values of AoI. 
In addition, knowledge of the ``\emph{source timestamps}'', namely timestamps of the packets stored at the sources, allows the scheduling policy to know which scheduling decisions can result in the largest AoI reduction upon successful delivery. 
Several works in the literature, including
~\cite{kadota2019scheduling,AoIDQN,li2024aequitas,ji2024age,kadota2019minimizing,zakeri2023minimizing,Whittle_Zhifeng,Whittle_YuPin,kadota2018scheduling,liu2023optimizing,AoI5G,liu2024optimizing,tang2020minimizing}, which are further described below, developed transmission scheduling policies that leverage real-time knowledge of source and/or destination timestamps in making scheduling decisions to minimize AoI. 
These works demonstrate the significant potential benefits of AoI-aware scheduling policies. 
%However, the assumption of perfect (i.e., reliable and without delay) knowledge of the current timestamps at the sources and/or destinations severely limits the scope of their contributions. 
%Perfect knowledge requires sources and/or destinations to continuously share timestamp information with the BS, which imposes that the connection between sources, BS, and destinations must be reliable and without delay. 
However, the assumption of perfect (i.e., reliable and without delay) knowledge of the current timestamps imposes that sources and/or destinations must be able to continuously share timestamp information with the BS which, in turn, imposes that the connection between sources, BS, and destinations must be reliable and without delay, thereby severely limiting the scope of their contributions. 
%For the sources and destinations to be able to share timestamp information with the BS in real-time, they must all be co-located. 
%For the BS to be able to collect timestamp information reliably and without delay, sources and/or destinations must be all co-located. 
%In~\cite{AoIbasicImplementation,WiFresh,WiSwarm}, the authors implemented AoI-aware scheduling policies using real hardware. A common feature of~\cite{AoIbasicImplementation,WiFresh,WiSwarm} is that the sources, BS, and destinations are all co-located as part of the same single-hop wireless network. The quasi-instantaneous feedback from single-hop wireless transmissions was used to determine whether the destination timestamps were updated. %or (at most) one-hop away, allowing the transmission scheduling policy to collect timestamp information.
In general, if the destinations -- e.g., remote monitors or cloud computing systems -- are not co-located with the BS and their connection is unreliable and/or delayed, then the BS would not have perfect knowledge of the timestamps and AoI. Our numerical results in Sec.~\ref{sec:simulation} suggest that even modest communication delays (of a few packet transmission times) can severely affect the performance of AoI-aware scheduling algorithms. 

%$\tau^D(t)$ and, thus, would not know the current value of AoI at the destinations. 

% However, in general, these timestamps are not readily available for the scheduling policy running at the %Data Link layer of the wireless 
% base station. 
% For example, the %Application layer 
% timestamps (and data) generated at the sources could be encrypted, or the destinations (such as remote monitors and cloud computing systems) could be far away from the base station, with a connection that may be unreliable and/or delayed. %, preventing the destinations from keeping 
% In such cases, it is unrealistic to assume that the transmission scheduling policy has perfect knowledge of timestamps at the sources/destinations, and can leverage this information to make scheduling decisions. 

% %network control algorithm (running at the base station) has perfect 

\vspace{0.5ex}
\noindent\textbf{Main Contributions.} In this paper, we consider a network in which the BS has either no knowledge or 
imperfect (i.e., delayed and unreliable) knowledge of source/destination timestamps and AoI. 
Furthermore, we consider sources that generate packets according to a general renewal process, which makes estimating timestamps and AoI more challenging when compared to sources that generate packets deterministically (e.g., immediately before transmission or periodically) or according to Bernoulli processes. Under this general and challenging network model, our main contributions can be summarized as follows:
%a general network with a general packet generation process, a multi-user uplink scenario with both unreliable channels from the sources to the BS and from the BS to the destination with transmission delay, no knowledge of system time, and delayed knowledge of AoI. Therefore, these related scheduling policies\cite{liu2023optimizing},\cite{liu2024optimizing},\cite{ji2024age} are not feasible in our model due to the general packet generation process, more realistic channels, and more limited knowledge. Our main contributions are summarized as follows:
\begin{itemize}[leftmargin=0.15in]
%\item We derive a lower bound on the AoI achievable by any scheduling policy. %To the best of our knowledge, Theorem~\ref{theo:lower_bound} provides the first lower bound for networks with delayed and unreliable connection between the BS and the destinations. %imperfect knowledge of source/destination timestamps and AoI.
\item We derive a closed form-expression for the AoI performance of any  Randomized Policy (see Proposition~\ref{prop:EWSAoIRandomzedproposition}). We derive performance guarantees for the Optimal Randomized Policy with sources that generate packets according to general renewal processes. Remark~\ref{rem:periodic} shows that, from the class of renewal processes with packet generation rate $\lambda_i$, periodic generation achieves the best performance guarantees. 
%Theorem~\ref{} provides the first AoI guarantee for networks with sources that generate packets according to general renewal processes. 
%For the special case of Bernoulli packet generation, Theorem~\ref{} reduces to~\cite[Theorem~?]{kadota2019minimizing}. For the special case of Periodic packet generation, Theorem~\ref{} shows for the first time that the performance
%the AoI expression for the network employing single packet queues with general packet generation processes and develop an optimal randomized policy for any general packet generation process, which can guarantee consistent optimality compared with the lower bound.
\item We propose low-complexity minimum mean square error (MMSE) estimators that use historical information logged by the BS -- including scheduling decisions, transmission outcomes, and delayed timestamps (if available) -- to estimate both the timestamps at the sources and the AoI at the destinations. These MMSE estimators enable AoI-aware scheduling policies in the literature to be used in networks with imperfect knowledge of source/destination timestamps and AoI. %To the best of our knowledge, this is the first MMSE estimator of AoI in the literature. 
%iterative MMSE Estimation algorithm \ref{MWw/Estimationnofeedback} for both AoI and system time in the case of no feedback. 
\item We derive performance guarantees for the well-known AoI-aware Max-Weight Policy~\cite{AoIsurvey,kadota2019minimizing} using the MMSE estimators, as opposed to relying on perfect knowledge of AoI. Theorem~\ref{theo:upperboundMW} provides a performance guarantee for both cases of imperfect knowledge and no knowledge of AoI. To the best of our knowledge, this is the first AoI-aware scheduling policy that attempts to minimize AoI without requiring any AoI knowledge.
%for this policy. % Max-Weight Policy with estimation. 
%The proof of Theorem ~\ref{} supplements the missing steps in the proof of \cite{ji2024age} and improves the performance bound given in \cite{liu2024optimizing}.
\item We evaluate the impact of the packet generation process, imperfect knowledge of timestamps and AoI, and scheduling policy on AoI. Our numerical results show that, contrary to intuition, the performance of the Max-Weight Policy with MMSE estimators is nearly identical for the cases of imperfect knowledge and no knowledge of AoI. 
%, regardless of whether feedback is available. 
%To the best of our knowledge, this is the first AoI-aware scheduling policy that minimizing AoI without requiring any AoI knowledge.
\end{itemize}

\vspace{0.5ex}
\noindent\textbf{Related Work.} The problem of developing transmission scheduling policies that optimize AoI in single-hop wireless networks has been extensively studied in the literature, e.g.,~\cite{kadota2019scheduling,saurav2023scheduling,AoIDQN,li2024aequitas,AoIOFDMA,Vishrant_Multihop,ji2024age,fountoulakis2023scheduling,kadota2019minimizing,Whittle_Zhifeng,Whittle_YuPin,kadota2018scheduling,AoI5G,AoIopenloop,liu2023optimizing,talak2018optimizing,liu2024optimizing,tang2020minimizing,zakeri2023minimizing,chen2024minimizing,ayan2019age}  Sources with different packet generation processes, including packet generation before transmission (also known as generate-at-will)~\cite{kadota2019scheduling,AoIopenloop,AoIDQN,AoIOFDMA,fountoulakis2023scheduling,li2024aequitas,Vishrant_Multihop,ji2024age}, Bernoulli packet generation~\cite{kadota2019minimizing,zakeri2023minimizing,Whittle_Zhifeng,ji2024age,Whittle_YuPin}, and periodic packet generation~\cite{AoI5G,ji2024age}, have been considered. 
Scheduling policies that can select one source per time slot~\cite{zakeri2023minimizing,Whittle_Zhifeng,kadota2018scheduling,kadota2019scheduling,AoIDQN,AoIopenloop,kadota2019minimizing,saurav2023scheduling,liu2023optimizing,ji2024age,Whittle_YuPin} or multiple sources per slot~\cite{liu2024optimizing,AoI5G,AoIOFDMA,tang2020minimizing,li2024aequitas,talak2018optimizing} have also been considered. 
Objective functions focused on minimizing AoI~\cite{kadota2018scheduling,AoIDQN,AoIopenloop,kadota2019minimizing,saurav2023scheduling,talak2018optimizing,AoIOFDMA,AoI5G,li2024aequitas,liu2023optimizing,liu2024optimizing,Vishrant_Multihop,ji2024age,Whittle_YuPin},
% minimizing PAoI~\cite{AoIopenloop}, and
minimizing variants of AoI~\cite{chen2024minimizing,ayan2019age}, and minimizing AoI subject to additional constraints such as energy~\cite{ji2024age,tang2020minimizing,saurav2023scheduling} or throughput~\cite{fountoulakis2023scheduling,kadota2019scheduling} have also been considered. 
Most works~\cite{kadota2019scheduling,AoIDQN,li2024aequitas,kadota2019minimizing,zakeri2023minimizing,Whittle_Zhifeng,Whittle_YuPin,AoI5G,kadota2018scheduling,tang2020minimizing}  assume perfect knowledge of both the timestamps at the sources and the AoI at the destinations. A few recent works~\cite{liu2023optimizing,liu2024optimizing,ji2024age} considered the problem of imperfect knowledge of either timestamps at the sources or AoI at the destinations.

In~\cite{liu2023optimizing,liu2024optimizing}, the authors consider a network with multiple sources sending packets to a BS. Sources generate packets according to a Bernoulli process. The BS schedules multiple sources at every time slot. The BS has imperfect knowledge of the source timestamps, but perfect knowledge of AoI. The authors formulated the problems as a partially observable Markov decision process (POMDP) that captures the ``belief'' of the current source timestamps, proposed an AoI-aware ``dynamic scheduling policy'' with high computational complexity, and then proposed action space reduction to address this challenge. 
In~\cite{ji2024age}, the authors consider a network with a BS generating and sending packets to multiple destinations. The BS generates packets periodically or according to Bernoulli processes. The BS sends a single packet to a selected destination at every time slot. The feedback of the transmission outcome is delayed. The BS has perfect knowledge of the source timestamps, but imperfect knowledge of AoI due to the feedback delay. The authors formulate the problem as a POMDP, develop an AoI estimator based on the perfect knowledge of the source timestamps, and develop a low-complexity ``greedy policy'' that uses the AoI estimator. 

In this paper, we develop estimators (in Sec.~\ref{sec:estimators}) and low-complexity scheduling policies (in Secs.~\ref{sec:random} and~\ref{sec:MW}) that do not rely on a specific packet generation process, e.g., generate-at-will, periodic, or Bernoulli, at the sources and that do not rely on perfect knowledge of source timestamps or AoI at the destinations. These generalized MMSE estimators can be readily used with existing AoI-aware transmission scheduling policies proposed in the literature, including Whittle's Index~\cite{Whittle_YuPin,Whittle_Zhifeng}, Max-AoI First~\cite{kadota2018scheduling,sun2018sources}, and Max-Weight (discussed in Sec.~\ref{sec:MW}).

The remainder of this paper is organized as follows. In Sec.~\ref{sec:system}, we describe the network model. In Sec.~\ref{sec:lower_bound}, we derive a lower bound on the achievable AoI. In Sec.~\ref{sec:random}, we develop and analyze the Optimal Randomized Policy. In Sec.~\ref{sec:estimators}, we propose MMSE estimators for timestamps and AoI. In Sec.~\ref{sec:MW}, we develop and analyze the Max-Weight Policy using the estimators. In Sec.~\ref{sec:simulation}, we provide numerical results. 
% In Sec. VII, we provide the emulation results. 
The paper is concluded in Sec.~\ref{sec:conclusion}.
% Due to the space constraint, some of the technical proofs are provided in the report in [19].

\section{Network Model}\label{sec:system}

Consider a network with $N$ sources sending time-sensitive packets to $N$ destinations through a wireless BS, as illustrated in Fig.~\ref{fig:Systemmodel}.
Let time be slotted, with slot index $t \in \{1, 2, \ldots,$ $ T\}$, where $T$ represents the time-horizon. 
%Denote the slot index by $t \in \{1, 2, \ldots, T\}$, and let the slot duration be normalized to unity. 
Each source generates new packets according to a renewal process. 
Let $a_i(t) = 1$ indicate the event of a packet generation at source $i$ during slot $t$, and $a_i(t) = 0$ otherwise. 
Denote by $X_i$ the \emph{inter-generation period}, i.e., the number of slots between two consecutive packet generation events at source $i$. 
The probability mass function (PMF) associated with $X_i$ is defined as $f_i(x):=\mathbb{P}(X_i = x)$ for 
$x \in\{1,2,\ldots,\bar{x_i}\}$ for $\bar{x_i}<\infty$. 
%$x \in \mathbb{N}$ and $x < \infty$. 
The \emph{average packet generation rate} is defined as $\lambda_i := {1/\mathbb{E}[X_i]}$. 
Since the packet generation process at source $i$ is a renewal process, the inter-generation period $X_i$ are independent and identically distributed (i.i.d.) over packets. Furthermore, we assume that the packet generation process across different sources are independent. 
%At the beginning of each slot $t$, packets are generated by each source $i \in \{1, 2, \ldots, N\}$  according to a renewal process. Let $a_i(t) = 1$ indicate the event of a packet generation at source $i$, and $a_i(t) = 0$ otherwise. Denote by $X_i$ the inter-generation period $X_i$ are the intervals between the slots where $a_i(t) = 1$ and no packets arrive in the intervening slots. The probability mass function (PMF) for $X_i$, denoted $f_i(x) = \mathbb{P}(X_i = x)$ for $x \in \mathbb{N}$ and $x < \infty$. The average packet generation rate $\lambda_i$ is defined as $\lambda_i := {1/\mathbb{E}[X_i]}$. Since $X_i$ models a renewal process, packet generation events are independent and identically distributed (i.i.d.) for each source $i$. Additionally, packet generation processes across different sources are assumed to be independent.

% \begin{figure}[htbp]
%     \centering
%     \includegraphics[width=0.5\textwidth]{Systemmodel.pdf}
%     \caption{Illustration of the network model: the BS serves $N$ sources, and each slot selects up to $K$ sources; the packets in each source $i$ transmitted through the wireless channel with channel reliability $p_i^S$, as well as the multi-hop heterogeneous network channel with channel reliability $p_i^D$, transmission delay $\theta_i$ and feedback delay $\omega_i$.}
%     \label{fig:Systemmodel}
% \end{figure}

Packets generated at source $i$ are enqueued in a \emph{single packet queue} that stores only the latest generated packet. %where a new packet replaces any older packet from the same source. 
The single packet queue is updated
% The Head-of-Line (HoL) packets represent the set of packets from all queues that are available to the BS for transmission.
%For each source \( i \), the packet in the source is updated 
only when a new packet is generated, in which case the new packet replaces any older packet in the queue. 
The single packet queue remains unchanged when its packet is successfully transmitted to the BS. %and no new packet generated. 
This means that: (i) after the first packet generation, the single packet queue will always contain exactly one packet; and (ii) sources can transmit repeated packets to the BS. 
%This means that: (i) after the first packet generation, the single packet queue will never be empty; and (ii) 
Let \( z_i(t) := t - \tau^S_i(t) \) represent the \emph{system time} of the packet stored at source \( i \) in slot \( t \), where \( \tau^S_i(t) \) represents the ``source timestamp'', i.e., the time slot in which the packet stored at source $i$ was generated. %(i.e., the generation time slot) of the packet. 
The system time evolves as:
\begin{equation}\label{zevolves}
    z_{i}(t+1)=\begin{cases} 
0 & \text{if } a_i (t)=1, \\ 
z_{i}(t)+1 & \text{otherwise.} 
\end{cases} 
\end{equation}

At the beginning of slot \( t \), the BS schedules transmissions from up to $K$ sources, where $K \leq N$. %the packets from certain sources. 
Let \( u_i(t) =1 \) indicate that source \( i \) is selected for transmission during slot \( t \), and \( u_i(t) =0 \) otherwise. It follows that %is denoted by the indicator function \( u_i(t) \in \{0,1\} \), where \( u_i(t) = 1 \) indicates that the BS schedules the packet from the source \( i \), and \( u_i(t) = 0 \) otherwise. 
%Due to channel interference constraints, the BS can receive and forward a maximum of \( K \) packets in any given time slot \( t \), hence 
\( \sum_{i=1}^N u_i(t) \leq K ,\forall t\).
Each selected source transmits its packet to the BS via an unreliable wireless channel. 
%The successfully and subsequently, the BS forwards these packets through a multi-hop heterogeneous network (the second channel) to their respective destinations. For the first channel, 
Let \( c^S_i(t)=1 \) indicate that the channel from source \( i \) to the BS is ON during slot \( t \), and \( c^S_i(t)=0 \) otherwise. The source channel $c^S_i(t)$ is i.i.d.\ over time and independent across different sources, with \( \mathbb{P}(c^S_i(t) = 1) = p^S_i \in (0, 1],\forall i,t \). 
The BS does not know $c^S_i(t)$ prior to making scheduling decisions. 
The BS successfully receives a packet from source $i$ in slot $t$ when $u_i(t)c^S_i(t)=1$. 
%The probability that a packet transmitted from source \( i \) is successfully received by the BS is \( \mathbb{P}(c^S_i(t) = 1) = p^S_i \in (0, 1] \). 

%The connection between the BS and the destinations is unreliable and delayed.
Packets that are successfully received by the BS during slot $t$ are immediately forwarded to their corresponding destinations 
%through a heterogeneous multi-hop network with transmission reliability $p_i^D$ and transmission delay $\theta_i$. 
through a heterogeneous multi-hop network with transmission reliability $p_i^D$, transmission delay $\theta_i$, and feedback delay $\omega_i$.
Let \( c^D_i(t)=1 \) indicate that the packet forwarded by the BS during slot \( t \) is successfully received by destination \( i \) in slot \( t + \theta_i \), and \( c^D_i(t)=0 \) otherwise. The destination channel $c^D_i(t)$ is i.i.d. over time and independent across different destinations, with \( \mathbb{P}(c^D_i(t) = 1) = p^D_i \in (0, 1],\forall i,t \). 
Destination $i$ successfully receives a packet from source $i$ in slot $t+\theta_i$ when $u_i(t)c^S_i(t)c^D_i(t)=1$. 
%Upon successful reception of a packet in slot $t+\theta_i$, destination $i$ sends an acknowledgement to the BS. This feedback reaches the BS in slot $t+\theta_i+\omega_i$. The unreliable and delayed connection between BS and destinations limits the knowledge of the BS, as discussed in Sec.~\ref{sec:observation}.

Upon successful reception of a packet in slot $t+\theta_i$, destination $i$ updates \( \tau^D_i(t+\theta_i) \) which represents the ``destination timestamp'', i.e., the time slot in which the latest packet received by destination $i$ was generated. Furthermore, destination $i$ sends an acknowledgment to the BS containing the value of \( \tau^D_i(t+\theta_i) \). This feedback reaches the BS in slot $t+\theta_i+\omega_i$. To study networks with unknown (or time-varying) feedback delay $\omega_i$ and networks with no feedback from the destinations, in Sec.~\ref{sec:estimators}, we consider $\omega_i\rightarrow\infty,\forall i$. 
%Notice that $\omega_i\rightarrow\infty$ unknown and/or time-varying feedback delay 
The unreliable and delayed connection between BS and destinations limits the knowledge of the BS, as discussed in Sec.~\ref{sec:observation}.

%The feedback delay and its implications to the network state observable at the BS are discussed in Sec.~\ref{sec:observation}.
%Upon successful reception of packets from up to $K$ sources, the BS immediately forwards these packets to the corresponding destinations. 
%For the second channel, there is an associated transmission delay \( \theta_i \) for packets transmitted to the destination \( i \), let \( c^D_i(t)\in \{0,1\} \) be an indicator function that equals to 1 when the packet transmitted at slot \( t \) from the BS is successfully received by destination \( i \) at slot \( t + \theta_i \), and 0 otherwise. The probability that a packet from the BS is successfully received by the destination \( i \) is \( \mathbb{P}(c^D_i(t) = 1) = p^D_i \in (0, 1] \).

%While it waits, the base station continues to select sources for transmission in subsequent slots. The feedback from destination~$i$ provides updated information about the latest packet successfully received by the application, allowing the base station to update its belief about the value of AoI $\AoI_i(t)$.

The transmission scheduling policies considered in this paper are non-anticipative, which means that they do not use future information when making scheduling decisions. Let $\Pi$ represent the class of non-anticipative policies and let $\pi \in \Pi$ denote an arbitrary admissible policy. Our goal is to develop scheduling policies $\pi$ %based on the observation $\mathbb{O}(t)$ 
that minimize AoI in the network. Next, we formulate the AoI minimization problem and define the observable network state from the perspective of the BS. %observation.%, respectively.

% Furthermore, \( c^D_i(t) = 0 \) for all \( t < \theta_i \) because packets cannot be transmitted before \(\theta_i\).

\subsection{Age of Information at the Destinations}

The Age of Information quantifies the information freshness from the perspective of the destinations. Let \( h_i(t) := t - \tau^D_i(t) \) be the AoI associated with destination \( i \) at the beginning of slot \( t \). %. By definition, \( h_i(t) := t - \tau^D_i(t) \), 
%where \( \tau^D_i(t) \) represents the generation time of the freshest packet delivered to destination \( i \) before and excluding slot \( t \). 
%If destination \( i \) does not receive a packet from source \( i \) during slot \( t \), then \( h_i(t+1) = h_i(t) + 1 \); if destination \( i \) receives a packet from source \( i \) during slot \( t \), then \( h_i(t+1) = z_i(t-\theta_i) + \theta_i \), since the received packet was transmitted by the source in slot \( t-\theta_i \). 
The evolution of \( h_i(t) \) is given by:
\begin{equation}\label{hevolve}
h_i(t+1) = \begin{cases} 
z_i(t-\theta_i) + \theta_i+1 & \text{if } \Lambda_i(t) = 1, \\ 
h_i(t) + 1 & \text{otherwise,} 
\end{cases}
\end{equation}
where \( \Lambda_i(t) := u_i(t-\theta_i)c^S_i(t-\theta_i)c^D_i(t-\theta_i) \in \{0, 1\}, \forall t > \theta_i, \forall i, \) is an indicator function that is equal to 1 when destination $i$ successfully receives a packet from source $i$ during slot $t$, %if there is a successful delivery from source \( i \) to destination \( i \) during the time slot \( t \), 
and 0 otherwise, with \( \Lambda_i(t) = 0, \forall t \leq \theta_i \) as packets cannot be delivered before \( t=\theta_i \). For simplicity, we assume that \( h_i(1) = 1 \) and \( z_i(1) = 0 ,\forall i \), which means that every source generates a new packet in slot $t=1$. %there is a packet generation at slot 1 at each source \( i \) .

% When a packet is received by destination \( i \), the destination sends an acknowledgement to the BS.
% The acknowledgement reaches the BS after \(\omega_i\) time slots. 
% This acknowledgement provides the BS with information about the latest packet successfully received by the application, 
% allowing the BS to update its knowledge of the packet generation time \( \tau_i^D(t) \) with a feedback delay of \(\omega_i\). In particular, we also consider the case of no-feedback, which is equivalent to \(\omega_i \rightarrow \infty\). 

To capture the information freshness in a network employing a transmission scheduling policy \( \pi \in \Pi \), we define the Expected Weighted Sum AoI (EWSAoI) in the limit as the time horizon $T$ grows to infinity as:
\begin{equation}\label{EWSAoIexpression}
    \mathbb{E}[J^\pi] = \lim_{T \to \infty} \frac{1}{TN} \sum_{t=1}^T \sum_{i=1}^N \alpha_i \mathbb{E}[h_i^{\pi}(t)], 
\end{equation}
where \( \alpha_i >0 \) represents the priority of source \( i \). %Next, we discuss the delayed and unreliable knowledge of the network state from the perspective of the BS. 

\subsection{Observable Network State at the Base Station}\label{sec:observation}

The delayed and unreliable channels between sources, BS, and destinations limits the knowledge of the BS. For example, the BS may not know the current AoI $h_i(t)$ at the destinations and the current system times $z_i(t)$ at the sources. In this section, we define the observable network state from the perspective of the BS. The BS observation $\mathbb{O}(t)$ at the beginning of slot $t$ is fundamental to the development of the MMSE estimator for AoI and system time discussed in Sec.~\ref{sec:estimators}.

At the beginning of slot~$t$, the \emph{BS knows the evolution of the network-wide AoI} \(\{\mathbf{h}(\varphi)\}_{\varphi=1}^{t-\boldsymbol{\omega}}\), where \(\mathbf{h}(\varphi) := \{h_1(\varphi), \ldots, h_N(\varphi)\}\) represents the set of AoI values in slot \(\varphi\); and $\boldsymbol{\omega}=\{\omega_1,$ $\ldots,\omega_N\}$ represents the set of feedback delays. 
%
%For each source \( i \), due to the feedback delay \(\omega_i\), the instantaneous \(\tau^D_i(t)\) becomes uncertain, and the BS must instead rely on the packet generation time of the last packet delivered with a feedback delay \(\omega_i\), that is, \(\tau^D_i(t - \omega_i)\). Furthermore, due to channel interference constraints, the BS cannot obtain instantaneous knowledge of \(\tau^A_i(t)\) at the beginning of the slot \( t \). Therefore, in each time slot \( t \), the observation of the BS from sources and destinations is based on the delayed feedback from each destination \( i \), i.e. \(\tau^D_i(t - \omega_i)\), which is equivalent to obtain \(h_i(t - \omega_i) = t - \tau^D_i(t - \omega_i)\).
%
% When a packet is received by destination \( i \), the destination sends an acknowledgement to the BS.
% The acknowledgement reaches the BS after \(\omega_i\) time slots. 
% This acknowledgement provides the BS with information about the latest packet successfully received by the application, 
% allowing the BS to update its knowledge of the packet generation time \( \tau_i^D(t) \) with a feedback delay of \(\omega_i\). In particular, we also consider the case of no-feedback, which is equivalent to \(\omega_i \rightarrow \infty\). 
%
%In addition to obtaining AoI information from other nodes in the network, the BS can also observe the transmission outcome. 
Furthermore, the \emph{BS knows the history of transmission outcomes} illustrated in Fig.~\ref{fig:Transmission Outcome}. Recall that a packet is successfully received by the BS when $u_i(t)c^S_i(t)=1$. Let $r_i(t)=1$ indicate the reception of a \emph{repeated packet} that had already been successfully received by the BS in a slot prior to $t$, and $r_i(t)=0$ otherwise. Let \( D_i(t) \) be the index of the latest packet received by the BS from source \( i \) at the beginning of slot \( t \), which evolves as follows:
\begin{equation}\label{eq:D_i(t)}
    D_i(t+1) = \begin{cases}
        D_i(t) + 1 & \text{if } u_i(t)c^S_i(t)(1-r_i(t))=1, \\
        D_i(t) & \text{otherwise,}
    \end{cases}
\end{equation}
with $D_i(1)=0,\forall i$. %Notice that the reception of repeated packets does not change $D_i(t)$.
%naturally, repeated packets ($r_i(t)=1$) do not affect the AoI at the destination.
For each packet \( D_i(t) \geq 1 \), denote by \({\mathcal{T}}_i[D_i(t)]\) the set of slots in which successful transmissions of packet \( D_i(t) \) occurred. This set is initialized as $\mathcal{T}_i[d]=\emptyset,$ $\forall d\geq 1$ and evolves as:
\begin{equation}\label{Tevolves}
    {\mathcal{T}}_i[D_i(t+1)] = \begin{cases}
        {\mathcal{T}}_i[D_i(t+1)] \cup \{t\} & \text{ if } u_i(t)c^S_i(t)=1, \\
        {\mathcal{T}}_i[D_i(t+1)] & \text{otherwise.}
    \end{cases}
\end{equation}
Notice that the reception of a repeated packet in slot $t$ does not change $D_i(t+1)$ and increases the cardinality of $|{\mathcal{T}}_i[D_i(t+1)]|$ by one element. 
The first slot in which packet \( D_i(t) \) was successfully received by the BS is denoted by \(\delta_i[D_i(t)] = \min({\mathcal{T}}_i[D_i(t)])\), and the last slot by \(\bar{\delta}_i[D_i(t)] = \max({\mathcal{T}}_i[D_i(t)])\), as illustrated in Fig.~\ref{fig:Transmission Outcome}.
%Then, \(\delta_i[D_i(t)] = \min({\mathcal{T}}_i[D_i(t)])\) represents 
%Define \(\delta_i[D_i(t)] := \min({\mathcal{T}}_i[D_i(t)])\) as 
%the first slot in which packet \( D_i(t) \) was successfully received by the BS, and \(\bar{\delta}_i[D_i(t)] = \max({\mathcal{T}}_i[D_i(t)])\) represents the last slot, as illustrated in Fig.~\ref{fig:Transmission Outcome}. %that packet \( D_i(t) \) was successfully transmitted. 

\begin{figure}[t]
\centering
\includegraphics[width=0.5\textwidth]{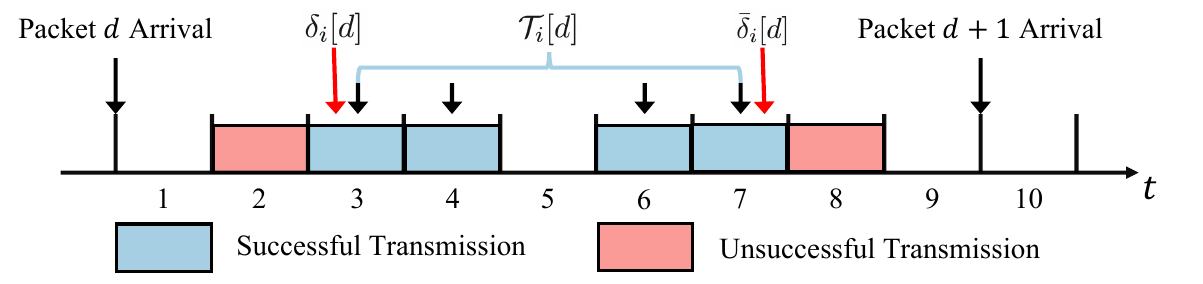} 
\vspace{-1.5\baselineskip}
\caption{Illustration of the history of transmission outcomes. The BS keeps track of the indices of the time slots in which successful packet transmissions occurred. These slots are part of the BS observation $\mathbb{O}(t)$ in~\eqref{eq:observation} that enables the BS to estimate AoI and system times.} %for each scheduled packet that successfully received by the BS in certain slots, record the packet's index along with the indices of each slot in which it was successfully received.}
\label{fig:Transmission Outcome}\vspace{-1.5\baselineskip}
\end{figure}

We define the \emph{BS observation} at the beginning of slot \( t \) as:
\begin{equation}\label{eq:observation}
    \mathbb{O}(t) := \left\{ \{\mathbf{h}(\varphi)\}_{\varphi=1}^{t-\boldsymbol{\omega}}, \mathbb{T}(t) \right\}, %\forall i
\end{equation}
%where \(\{\mathbf{h}(\varphi)\}_{\varphi=1}^{t-\boldsymbol{\omega}}\) is the AoI knowledge history with delay, \(\mathbf{h}(\varphi) := \{h_1(\varphi), h_2(\varphi), \ldots, h_N(\varphi)\}\) represents the network-wide AoI of slot \(\varphi\) and $\boldsymbol{\omega}=\{\omega_1,\omega_2,...,\omega_N\}$ represents the network-wide feedback delay; 
where $\mathbb{T}(t):=\{ {\mathcal{T}}_i[d]\}_{d=1}^{D_i(t)}$ represents the history of transmission outcomes up to and excluding slot~$t$. %; and $\mathbf{u(t)c^S(t)}:=\{u_i(t)c_i^S(t)\}_{i=1}^N$, $\mathbf{r}(t):=\{r_i(t)\}_{i=1}^N$ represent the transmission outcome at slot~$t$.

\begin{remark}
    %The BS observation $\mathbb{O}(t)$ keeps track of the history of successful packet transmissions. 
    Failed transmissions are not accounted for in~\eqref{eq:observation} since they do not provide useful information to estimate the AoI or system times. Transmission failures from a given source~$i$ occur with probability \( 1 - p^S_i \) irrespective of the packet being transmitted. Furthermore, failed transmissions are not decoded or forwarded. Hence, they do not provide insight into which packet was transmitted nor its generation time. 
\end{remark}

\section{Lower Bound on the achievable EWSAoI}\label{sec:lower_bound}
In this section, we derive a lower bound on the achievable EWSAoI by any admissible scheduling policy $\pi \in \Pi$. The lower bound in Theorem~\ref{theo:lower_bound} generalizes the result in~\cite{kadota2019minimizing} to the more challenging network model with: (i)~delayed and unreliable connection between the BS and the destinations; and (ii)~transmission scheduling policies that can select multiple sources in each slot~$t$. 

\begin{theorem}\label{theo:lower_bound}
For any given network model with parameters $\{N,K,\alpha_i,p_i^S,p_i^D,\lambda_i,\theta_i\}$ and any renewal packet generation processes at the sources with PMF $f_i(x)=\mathbb{P}(X_i = x)$, 
%For any given network model with parameters $\{N,K,\alpha_i,p_i^S,p_i^D,\lambda_i\}$ and any renewal packet generation processes at the sources, %the optimization problem in (\ref{eq:lowerboundobj}) -- (\ref{eq:constraint2forLB}) provides a 
the lower bound on the AoI minimization problem in~\eqref{EWSAoIexpression}, namely $L_B \leq \mathbb{E}[J^\pi],\forall \pi\in\Pi$, is given by
\begin{equation}\label{performancelowerbound}
 L_B = \frac{1}{2N} \sum_{i=1}^N \alpha_i \left( \frac{1}{q_i^{L_B}}  + 2\theta_i +1 \right)    
\end{equation}
where $q_i^{L_B}$ yields from Algorithm 1.
% \begin{equation}
%     q_i^{L_B} = v_i \min\{1; \sqrt{\frac{\gamma_i}{\gamma^*}  }\}, \forall i\; ,
% \end{equation}
% $v_i = \min\{\lambda_i, p^S_i p^D_i\}$, $\gamma_i = \alpha_i p^S_ip^D_i/ (2N v_i^2), \forall i,$ and $\gamma^*$ yields from Algorithm 1.
\end{theorem}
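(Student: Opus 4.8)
The plan is to establish the lower bound by the standard ``per-source renewal-reward'' argument adapted to the multi-hop, multi-source setting. First I would fix an arbitrary policy $\pi \in \Pi$ and decompose the EWSAoI in~\eqref{EWSAoIexpression} source-by-source, so that it suffices to lower bound $\lim_{T\to\infty}\frac{1}{T}\sum_{t=1}^T \mathbb{E}[h_i^\pi(t)]$ for each $i$ separately. For a fixed source, I would look at the sequence of \emph{delivery slots} at destination~$i$ -- the slots where $\Lambda_i(t)=1$ -- and note from~\eqref{hevolve} that between consecutive deliveries the AoI grows linearly by $1$ per slot, starting from the value $z_i(t-\theta_i)+\theta_i+1$ it is reset to. Summing the area under this sawtooth curve over an inter-delivery interval of length $I_{i,k}$ gives a contribution of the form $\tfrac12 I_{i,k}^2 + (\text{reset height} - \tfrac12) I_{i,k}$, and the time average of $h_i^\pi(t)$ is $\big(\sum_k \text{area}_k\big)/\big(\sum_k I_{i,k}\big)$.

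**Next I would lower bound each piece.** The reset height $z_i(t-\theta_i)+\theta_i+1$ is at least $\theta_i+1$ since $z_i\ge 0$, which is where the $2\theta_i+1$ term in~\eqref{performancelowerbound} will come from. For the quadratic term, I would apply Jensen / Cauchy--Schwarz to the convex function $x\mapsto x^2$: $\frac{\sum_k I_{i,k}^2}{\sum_k I_{i,k}} \ge \frac{\sum_k I_{i,k}}{M}$ where $M$ is the number of delivery intervals, so the time-average AoI is bounded below by roughly $\tfrac12 \overline{I_i} + \theta_i + \tfrac12$ where $\overline{I_i}$ is the long-run average inter-delivery time. The average inter-delivery time is $1/q_i$, where $q_i$ is the long-run fraction of slots in which source~$i$ delivers a \emph{fresh} packet to its destination; this yields the $\tfrac{1}{q_i}$ term. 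What remains is to show that the vector $(q_1,\ldots,q_N)$ achievable by any policy lies in a region whose ``optimal'' point (for the weighted objective $\sum_i \alpha_i/q_i$) is exactly the $q_i^{L_B}$ output by Algorithm~1 -- i.e., that Algorithm~1 solves the relaxed convex program $\min \sum_i \alpha_i/q_i$ subject to the feasibility constraints on the delivery rates.

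**The main obstacle** is pinning down the feasible region for $(q_1,\ldots,q_N)$ and showing Algorithm~1 optimizes over it. Two constraints interact: (i)~a per-source constraint, since even if source~$i$ is scheduled every slot, a delivery requires both channel successes and, crucially, a \emph{new} packet having been generated -- so $q_i$ is capped by a quantity depending on $p_i^S$, $p_i^D$, and the renewal PMF $f_i$ (this is where the renewal structure, rather than just $\lambda_i$, enters, and where I expect the delicate counting: among slots where a scheduled transmission could deliver, only a fraction correspond to genuinely fresh content, and bounding this fraction for a general renewal process is the technical heart); and (ii)~a coupling constraint from $\sum_i u_i(t)\le K$, giving something like $\sum_i q_i/(p_i^S p_i^D) \le K$ after accounting for the expected number of transmissions needed per delivery. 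I would argue both bounds hold for any $\pi$ by taking expectations and using the i.i.d.\ channel assumptions plus the renewal-process law of large numbers, then invoke convex-optimization KKT conditions to identify the minimizer of $\sum_i \alpha_i/q_i$ over this polytope-like region with the water-filling-style iteration in Algorithm~1. Finally I would assemble the per-source bounds with weights $\alpha_i/(2N)$ to recover~\eqref{performancelowerbound}, being careful that the $\liminf$ over $T$ and the boundary cases (sources never delivered, i.e.\ $q_i=0$, giving infinite AoI and thus trivially satisfying the bound) are handled.
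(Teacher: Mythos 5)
Your proposal is correct in outline and follows essentially the same route the paper relies on (its proof just generalizes the optimization-based lower bound of the cited prior work and solves the KKT conditions via Algorithm~1): a per-source sawtooth area argument with Cauchy--Schwarz giving the per-source bound $\tfrac{1}{2q_i}+\theta_i+\tfrac12$, a feasible delivery-rate region $q_i\leq\min\{\lambda_i,\,p_i^S p_i^D\}$ together with $\sum_{i=1}^N q_i/(p_i^S p_i^D)\leq K$, and the water-filling KKT solution that Algorithm~1 computes. The only refinement worth noting is that the inter-delivery intervals must be indexed by \emph{informative} (first-time) deliveries---re-delivery of an already-received packet resets $h_i$ to exactly the value it would have had anyway, so such slots can be discarded from the sawtooth---which is what legitimizes the cap $q_i\leq\lambda_i$ and shows that the per-source constraint depends only on the rate $\lambda_i$, not on the full PMF $f_i$ as you anticipated; with that observation the ``fresh-content'' counting you flagged as the technical heart reduces to the renewal law of large numbers.
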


\begin{proof}
The solution of the optimization problem in~\cite[Eqs.~(14a)-(14c)]{kadota2019minimizing} gives a lower bound to a simplified network model that assumes perfect knowledge of AoI and a single transmission per slot. We generalize~\cite[Eqs.~(14a)-(14c)]{kadota2019minimizing} to the more practical network model in Sec.~\ref{sec:system}. Algorithm~\ref{alg:lowerbound} finds the unique solution to the Karush–Kuhn–Tucker (KKT) conditions of the generalized optimization problem. The details have been omitted because of space constraints.
% For (\ref{eq:lowerboundobj}) -- (\ref{eq:constraint2forLB}), we constructing the Lagrangian function and considering the complementary slackness conditions and dual feasibility, then
% % we identified two scenarios for long-term throughput: 1) sources with positive dual variables have throughput equal to the maximum value; 2) sources with zero dual variables have throughput determined by the parameter $\gamma$. 
% by iteratively adjusting $\gamma$ to satisfy the constraints, we obtained the unique solution $\gamma^*$ from Algorithm 1. The details has been omitted because of space constraints.
\end{proof}

\begin{algorithm}
\caption{Solution to the Lower Bound}\label{alg:lowerbound}
\begin{algorithmic}[1]
\STATE $\tilde{\gamma} = \left(\sum_{i=1}^N \sqrt{\alpha_i / p^S_ip^D_i}\right)^2 / (2NK^2)$
\STATE $v_i = \min\{\lambda_i,p^S_ip^D_i\}$ and $\gamma_i = \alpha_i p^S_ip^D_i/ (2N v_i^2), \forall i$
\STATE $\gamma = \max\{\tilde{\gamma}; \gamma_i\}$
\STATE $q_i = v_i \min\{1; \sqrt{\gamma_i / \gamma}\}, \forall i$, and $S = \sum_{i=1}^N q_i / p^S_ip^D_i$
\WHILE{$S < K$ and $\gamma > 0$}
    \STATE decrease $\gamma$ slightly
    \STATE repeat step 4 to update $q_i$ and $S$
\ENDWHILE
\RETURN $q_i^{L_B} = q_i, \forall i$
\end{algorithmic}
\end{algorithm}

\section{Randomized Scheduling Policy}\label{sec:random}

Denote by $\Pi_r$ the class of Randomized policies. 
Let $r \in \Pi_r$ be a scheduling policy that, in each slot~$t$, selects source $i$ with probability $\mu^r_i \in (0,1]$. %It follows that $\sum_{i=1}^N \mu_i = K - \mu_0$ due to the channel interference constraints. 
If source $i$ is selected during slot $t$, then it transmits a packet to the BS. If this packet is successfully received, the BS immediately forwards this packet to destination $i$. %to destination $i$, and $u_i(t) = 0$ otherwise.
The scheduling probabilities $\mu_i$ are fixed over time and satisfy $\sum_{i=1}^N \mu^r_i = K$. %Without loss of optimality, we assume that exactly $K$ sources are selected in every time slot, which could be achieved by mapping the scheduling probabilities of sources to the scheduling probabilities of $\left(^N_K\right)$ $K$-sparse binary \emph{possible service vectors}~\cite[Chapter~2.3]{neely2022stochastic} of length $N$. %due to the channel interference constraints. 
Each randomized policy $r$ is fully characterized by the set $\{\mu^r_i\}_{i=1}^N$. These policies select sources at random\footnote{Consider the collection of $\left(^N_K\right)$ subsets of $K$ out of $N$ sources. Each subset is associated with a scheduling probability. To select exactly $K$ sources in each time slot, policy $r$ selects a subset according to this probability vector. It can be shown that $\mu^r_i\in[0,1],\forall i,$ and $\sum_{i=1}^N \mu^r_i = K$ are necessary and sufficient to represent any probability vector.}, without taking into account the current AoI $h_i(t)$, system time $z_i(t)$, nor the transmission delay $\theta_i$. 

In this section, we obtain the Optimal Randomized Policy and provide performance guarantees in terms of AoI. 
Specifically, Proposition~\ref{prop:EWSAoIRandomzedproposition} provides the EWSAoI associated with an arbitrary randomized policy $r \in \Pi_r$ and Theorem~\ref{theo:stationary_performance} characterizes the Optimal Randomized Policy and its performance guarantee. %as a solution to an optimization problem. 
%Then, we formulate the Optimal Randomized Policy for any packet generation processes and prove the performance guarantee.

%\subsection{EWSAoI for Randomized Policy}
%Consider a network with $N$ sources and destinations, where each source $i$ has a packet generation process following $f_i(x)$, priorities $\alpha_i$, channel reliabilities $p^S_i$ and $p^D_i$, and a transmission delay $\theta_i$ from the BS to the destination $i$. The BS selects sources according to $r \in \Pi_r$, with scheduling probabilities $\{\mu^r_i\}_{i=1}^N$. We derive following proposition of the achievable EWSAoI for network employing arbitrary randomized scheduling policy $r$. 
% Consequently, as a by-product, the EWSAoI of any network employing LIFO G/Ber/1  also conforms to Proposition \ref{prop:EWSAoIRandomzedproposition}.

\begin{proposition}\label{prop:EWSAoIRandomzedproposition}
For any given network model with parameters $\{N,K,\alpha_i,p_i^S,p_i^D,\lambda_i,\theta_i\}$, any renewal packet generation processes at the sources with PMF $f_i(x)=\mathbb{P}(X_i = x)$, and an 
%For any given network model with parameters $\{N, K, \alpha_i, p_i^S, p_i^D, \lambda_i,f_i, \theta_i\}$, for an 
arbitrary randomized policy $r \in \Pi_r$ with scheduling probabilities $\{\mu^r_i\}_{i=1}^N$, the EWSAoI is given by:
\begin{equation}\label{EWSAoIforG/Ber/1}
    \mathbb{E}[J^r] = \frac{1}{N}  \sum_{i=1}^N \alpha_i \left(\frac{\mathbb{E}\left[X_i^{2}\right]\lambda_i}{2} + \frac{1}{p^D_i p_i^S \mu^r_i} + \theta_i - 1\right)
\end{equation}
\end{proposition}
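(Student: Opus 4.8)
The plan is to reduce the claim, by linearity of expectation in~\eqref{EWSAoIexpression}, to computing for each source~$i$ the long-run time average $\bar h_i := \lim_{T\to\infty}\frac1T\sum_{t=1}^T \mathbb{E}[h_i(t)]$ separately, and then to evaluate $\bar h_i$ by a renewal-reward argument on the per-source AoI sawtooth. Under an arbitrary randomized policy, the source selections, the source channels, and the destination channels are mutually independent i.i.d.\ Bernoulli sequences, so the event $u_i(s)c^S_i(s)c^D_i(s)=1$ that triggers a reception at destination~$i$ in slot $s+\theta_i$ occurs i.i.d.\ over slots with probability $q_i := p^S_i p^D_i \mu^r_i$. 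Consequently the reception slots $\{t_n\}$ form a $\theta_i$-shift of a Bernoulli process, the inter-delivery gaps $I_n := t_{n+1}-t_n$ are i.i.d.\ geometric with parameter $q_i$, and between deliveries $h_i(t)$ grows by one per slot while at each delivery it is reset according to~\eqref{hevolve}. First I would record that the reset value equals $A_n := \theta_i + z_i(s_n)$, where $s_n = t_n-\theta_i$ is the transmission slot of the delivered packet, and that a delivery of a repeated (stale) packet does not break the sawtooth structure: in that case~\eqref{hevolve} merely continues the same linear trajectory.

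Next I would apply the renewal-reward theorem with the inter-delivery intervals as cycles, so that $\bar h_i$ equals the expected area under the sawtooth over one cycle divided by $\mathbb{E}[I_n]$. Over a cycle of length $I_n$ the AoI runs through the values $A_n, A_n+1, \dots$, so the cycle area is $I_n A_n$ plus a triangular term of the form $\tfrac12 I_n(I_n\pm 1)$. The crucial point is that $A_n$ and $I_n$ are independent: conditioning on the entire packet-generation process, $z_i(s_n)$ is a deterministic function of $s_n$, while the geometric gap $I_n$ is, by memorylessness, independent of $s_n$ and of the generation process; hence $\mathbb{E}[I_n A_n]=\mathbb{E}[I_n]\,\mathbb{E}[A_n]$. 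Dividing the expected cycle area by $\mathbb{E}[I_n]$ then gives $\bar h_i = \mathbb{E}[A_n] + \frac{\mathbb{E}[I_n^2]}{2\mathbb{E}[I_n]} + (\text{const})$, and substituting the geometric moments $\mathbb{E}[I_n]=1/q_i$ and $\mathbb{E}[I_n^2]=(2-q_i)/q_i^2$ produces the term $1/(p^D_i p^S_i \mu^r_i)$ of~\eqref{EWSAoIforG/Ber/1}.

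It remains to evaluate $\mathbb{E}[A_n] = \theta_i + \mathbb{E}[z_i(s_n)]$, which is where the renewal structure of the packet generation enters. Here $z_i(\cdot)$ from~\eqref{zevolves} is exactly the backward-recurrence-time (age) process of the renewal process with inter-generation PMF $f_i$, and I would argue that sampling it at the slots $s_n$ — the points of a Bernoulli process independent of the generation process — yields in steady state the time-stationary distribution of $z_i$ (a PASTA-type ergodic argument: the long-run fraction of sampling slots with source age $j$ equals the long-run fraction of all slots with source age $j$). The mean of that law is obtained by a second renewal-reward computation, now on the age process over a generation cycle of length $X_i$, giving $\mathbb{E}[z_i(s_n)] = \mathbb{E}[X_i^2]/(2\mathbb{E}[X_i]) + (\text{const}) = \lambda_i\mathbb{E}[X_i^2]/2 + (\text{const})$, which is the first term of~\eqref{EWSAoIforG/Ber/1}. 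Collecting the additive constants from the two renewal-reward computations together with the delay $\theta_i$ yields the ``$\theta_i - 1$''.

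The main obstacle is not any single calculation but the bookkeeping that makes this rigorous: (i)~justifying that the coupled process formed by the renewal generation stream and the independent Bernoulli delivery stream is stationary and ergodic, so the renewal-reward limits exist and equal the stated ratios; (ii)~proving cleanly the independence $A_n\perp I_n$ and the fact that the Bernoulli-sampled source age inherits the time-stationary law, while correctly handling the slots in which a repeated packet is delivered without double-counting any area under the sawtooth; and (iii)~tracking the $\pm\tfrac12$-type constants that the discrete-time conventions in~\eqref{zevolves}--\eqref{hevolve} generate, so that they combine exactly into the ``$\theta_i-1$'' appearing in~\eqref{EWSAoIforG/Ber/1}.
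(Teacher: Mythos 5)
Your route is genuinely different from the paper's. The paper's proof is a two-line reduction: it notes that the single-packet queue is AoI-equivalent to a LIFO (LCFS, preemptive) queue, that under a randomized policy the per-slot delivery event for source $i$ has probability $q_i=p_i^Sp_i^D\mu_i^r$ so the service time is geometric with that parameter, and then it substitutes this into the closed-form average-AoI expression of~\cite[Theorem~3]{tripathi2019age} and adds $\theta_i$. You instead re-derive the underlying queueing formula from scratch: per-source decomposition of~\eqref{EWSAoIexpression}, renewal-reward over delivery epochs, independence of the geometric inter-delivery gap $I_n$ from the reset level $A_n$, Bernoulli-sampling of the source age yielding the time-stationary age law of the renewal generation process, and the correct observation that repeated-packet deliveries merely continue the sawtooth. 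All of these steps are sound (the independence and the sampling argument are exactly the right lemmas), and the self-contained argument is more informative than the citation, at the cost of the stationarity/ergodicity bookkeeping you list under item~(i).

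The genuine gap is the step you assert rather than verify, your item~(iii): the claim that the half-integer constants ``combine exactly into the $\theta_i-1$.'' If you carry your computation through with the paper's conventions~\eqref{zevolves}--\eqref{hevolve}, the cycle following a delivery in slot $s_n+\theta_i$ contributes AoI values $z_i(s_n)+\theta_i+k$ for $k=1,\dots,I_n$, so the time average is $\mathbb{E}[z_i(s_n)]+\theta_i+\mathbb{E}[I_n(I_n+1)]/(2\mathbb{E}[I_n])=\bigl(\lambda_i\mathbb{E}[X_i^2]/2-1/2\bigr)+\theta_i+1/q_i$, i.e., the additive constant comes out as $\theta_i-1/2$, not $\theta_i-1$. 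A sanity check makes this concrete: take $X_i\equiv 1$, $\mu_i^r=p_i^S=p_i^D=1$, $\theta_i=0$; then \eqref{hevolve} forces $h_i(t)\equiv 1$, while the corresponding term in~\eqref{EWSAoIforG/Ber/1} evaluates to $1/2$. So under the slotted dynamics as written, your derivation does not reproduce~\eqref{EWSAoIforG/Ber/1} exactly but differs from it by a half-slot offset, which apparently traces to a discrete-time age/service convention mismatch between \eqref{zevolves}--\eqref{hevolve} and the formula inherited from~\cite{tripathi2019age}. You would need either to adopt that reference's convention explicitly or to flag the offset; as it stands, the one place where your proof could disagree with the target statement is precisely the place left as ``bookkeeping,'' so the proposal cannot be accepted as a complete proof of the stated equality.
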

\begin{proof}
The average AoI for a Last-In First-Out (LIFO) G/G/1 queue is derived in~\cite[Theorem~3]{tripathi2019age}. Notice that the average AoI for a LIFO queue is equivalent to the average AoI for the single packet queue described in Sec.~\ref{sec:system}. Furthermore, for any randomized policy $r \in \Pi_r$, the 
%We adopt the expression for the average AoI for a LIFO G/G/1 queue as given by \cite{tripathi2019age}, which is equivalent to single packet queues we consider.
%Notice that the 
service time of a packet %time of each delivered packet to the delivered time of the packet, 
is geometrically distributed with parameter \(p^S_ip^D_i\mu^r_i\). Substituting the service time and the inter-generation period $X_i$ in~\cite[Theorem~3]{tripathi2019age}, and accounting for the delay $\theta_i$, we get~\eqref{EWSAoIforG/Ber/1}. 
%Consequently, as a by-product, the EWSAoI of any network employing LIFO G/Ber/1  also conforms to Proposition \ref{prop:EWSAoIRandomzedproposition}.
\end{proof}

%\subsection{Optimal Randomized Policy}
From the expression for the average AoI obtained in~\eqref{EWSAoIforG/Ber/1}, we can find the Optimal Randomized Policy with scheduling probabilities $\{\mu^R_i\}_{i=1}^N$ given by the optimization problem below: %is to select optimal scheduling probabilities that minimize the \eqref{EWSAoIforG/Ber/1} is equivalent to the following optimization problem:
\begin{align}
    \min_{r\in \Pi_r}   &\sum_{i=1}^N \frac{\alpha_i }{p^D_i p_i^S \mu^r_i}, \text{s.t.} \sum_{i=1}^N \mu^r_i = K \mbox{ and } 0 < \mu^r_i \leq 1, \forall i.\label{eq:OPTRandom}
\end{align}
Theorem~\ref{theo:stationary_performance} provides the solution to~\eqref{eq:OPTRandom}. 
Interestingly, \emph{the objective function and the optimal scheduling probabilities are independent of the packet generation processes}. %, meaning the optimal scheduling probabilities remain the same for a given network, regardless of the packet generation process. Next, we solve the optimization problem in (\ref{OPT for Randomized})-(\ref{eq:OPTRandom}) to obtain the optimal scheduling probabilities and prove the optimality of the randomized policy.

\begin{theorem}\label{theo:stationary_performance}
For any given network model with parameters $\{N,K,\alpha_i,p_i^S,p_i^D,\theta_i\}$ and any renewal packet generation processes, the scheduling probabilities associated with the Optimal  Randomized policy $R$ are given by:
\begin{equation}\label{eq:solutionRandomized}
    \mu^R_i = \min\left\{1; \sqrt{\frac{\vartheta_i}{\vartheta^*}  }\right\}, \forall i
\end{equation}
where $\vartheta_i = \alpha_i/p^D_ip_i^S , \forall i$, and $\vartheta^*$ yields from Algorithm 2. 
Substituting $\mu^R_i$ from~\eqref{eq:solutionRandomized} into the EWSAoI in~\eqref{EWSAoIforG/Ber/1} gives the performance of the Optimal Randomized Policy $R$. % which  is given by \eqref{EWSAoIforG/Ber/1}, where $\mu^r_i$ is given by~\eqref{eq:solutionRandomized}
% \begin{equation}\label{EWSAoIforRandomized}
%     \mathbb{E}[J^R] = \frac{1}{N}  \sum_{i=1}^N \alpha_i \left(\frac{\mathbb{E}\left[X_i^{2}\right]\lambda_i}{2}+\frac{1}{p^D_ip_i^S\mu^R_i}+ \theta_i -2\right) \; ,
% \end{equation}
%and related to the lower bound according to
It follows that the performance of $R$ is guaranteed to be within:
\begin{equation}\label{eq:Randomized_guarantee}
    L_B\leq\mathbb{E}[J^R] \leq \rho L_B
\end{equation}
where $\rho$ represents the optimality ratio of $R$, and is given by
\begin{equation}\label{eq:optimality}
    \rho = \frac{\sum_{i=1}^N\alpha_i\mathbb{E}\left[X_i^{2}\right]\lambda_i^2}{\sum_{i=1}^N\alpha_i}+2
\end{equation}
%For Bernoulli packet generation processes, $\rho=4$; for Periodic packet generation processes, $\rho=3$, achieving the minimum optimality \(\rho\).
\end{theorem}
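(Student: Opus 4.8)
The plan is to proceed in three stages: (i) solve the convex program~\eqref{eq:OPTRandom} to obtain the closed form~\eqref{eq:solutionRandomized}; (ii) plug the solution back into~\eqref{EWSAoIforG/Ber/1} to get the performance of $R$ and the left inequality $L_B\le\mathbb{E}[J^R]$; and (iii) bound the ratio $\mathbb{E}[J^R]/L_B$ to establish~\eqref{eq:optimality}.

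\textbf{Stage (i): solving the optimization.} The objective $\sum_i \vartheta_i/\mu_i^r$ with $\vartheta_i=\alpha_i/p_i^Dp_i^S$ is strictly convex on the feasible region, which is a nonempty compact convex set, so there is a unique minimizer. I would write the Lagrangian with multiplier $\nu$ for the equality constraint $\sum_i\mu_i^r=K$ and multipliers for the box constraints $\mu_i^r\le 1$, and examine the KKT conditions. For sources whose upper bound is inactive, stationarity gives $-\vartheta_i/(\mu_i^r)^2+\nu=0$, i.e.\ $\mu_i^r=\sqrt{\vartheta_i/\nu}$; setting $\vartheta^*=\nu$ yields the form in~\eqref{eq:solutionRandomized} once we argue that the correct water-filling level makes $\mu_i^r=\min\{1,\sqrt{\vartheta_i/\vartheta^*}\}$ and that $\vartheta^*$ is determined by $\sum_i\min\{1,\sqrt{\vartheta_i/\vartheta^*}\}=K$. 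This is the standard water-filling argument; Algorithm~2 simply performs the search for $\vartheta^*$ (decreasing the level until the sum reaches $K$, with the clipped sources frozen at $1$), and I would note that $K\le N$ guarantees a feasible level exists and monotonicity in $\vartheta^*$ guarantees uniqueness.

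\textbf{Stages (ii)--(iii): the performance guarantee.} The left inequality $L_B\le\mathbb{E}[J^R]$ is immediate from Theorem~\ref{theo:lower_bound}, since $R\in\Pi_r\subseteq\Pi$. For the right inequality, I would compare~\eqref{EWSAoIforG/Ber/1} evaluated at $\{\mu_i^R\}$ against~\eqref{performancelowerbound}. Writing $\mathbb{E}[J^R]=\frac1N\sum_i\alpha_i\big(\tfrac{\mathbb{E}[X_i^2]\lambda_i}{2}+\tfrac{1}{p_i^Dp_i^S\mu_i^R}+\theta_i-1\big)$, the plan is to bound the ``service'' term $\tfrac{1}{p_i^Dp_i^S\mu_i^R}$ by a constant multiple of the corresponding term $\tfrac{1}{q_i^{L_B}}$ in $L_B$, and to bound the ``generation'' term $\tfrac{\mathbb{E}[X_i^2]\lambda_i}{2}$ against $\tfrac{1}{2q_i^{L_B}}+\theta_i+\tfrac12$; combining these termwise and using $\lambda_i=1/\mathbb{E}[X_i]$ together with $\mathbb{E}[X_i^2]\lambda_i^2=\mathbb{E}[X_i^2]/\mathbb{E}[X_i]^2\ge 1$ should collapse everything to the claimed ratio $\rho=\big(\sum_i\alpha_i\mathbb{E}[X_i^2]\lambda_i^2\big)/\big(\sum_i\alpha_i\big)+2$. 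The key auxiliary fact is that $q_i^{L_B}\le q_i := v_i\min\{1,\sqrt{\gamma_i/\gamma}\}\le v_i=\min\{\lambda_i,p_i^Sp_i^D\}$ from Algorithm~1, while the Optimal Randomized Policy achieves an effective rate $p_i^Sp_i^D\mu_i^R$ that one can show dominates $q_i^{L_B}$ — i.e.\ $R$ is never ``slower'' per source than the lower-bound allocation — so $\tfrac{1}{p_i^Dp_i^S\mu_i^R}\le\tfrac{1}{q_i^{L_B}}$.

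\textbf{Main obstacle.} The routine part is the water-filling derivation; the delicate part is Stage~(iii), specifically showing $p_i^Sp_i^D\mu_i^R\ge q_i^{L_B}$ for every $i$ so that the service terms compare cleanly, and then verifying that the leftover additive terms ($\theta_i$, constants, and the generation term) can all be absorbed into the ``$+2$'' and the weighted-average factor. This requires tracking the relationship between the two separate water-filling procedures (Algorithm~1 for $q_i^{L_B}$, which mixes $\lambda_i$ and $p_i^Sp_i^D$, versus Algorithm~2 for $\mu_i^R$, which only sees $\alpha_i/p_i^Dp_i^S$) and arguing that the lower-bound allocation is always ``conservative enough'' that the randomized policy's rate dominates it termwise. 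Once that domination is in hand, the ratio bound follows by the elementary inequality $\tfrac{\sum_i(A_i+B_i)}{\sum_i B_i}\le \max_i\tfrac{A_i}{B_i}\cdot(\text{something})+1$ applied with $B_i=\tfrac{\alpha_i}{2q_i^{L_B}}$, which I would make precise using $\mathbb{E}[X_i^2]\lambda_i^2\ge1$ to lower-bound the generation contribution.
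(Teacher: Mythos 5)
Your Stage (i) matches the paper (KKT/water-filling solved by Algorithm~2), and the left inequality $L_B\le\mathbb{E}[J^R]$ is fine. The gap is in Stage (iii): the auxiliary fact you lean on, namely the termwise domination $p_i^Sp_i^D\mu_i^R\ge q_i^{L_B}$ for every $i$, is not only unproven but false in general. The two water-filling problems have different caps ($q_i\le v_i=\min\{\lambda_i,p_i^Sp_i^D\}$ versus $\mu_i\le 1$), so when one source has a very small generation rate, Algorithm~1 freezes its $q_i$ at $\lambda_i$ and pours the freed budget into the other sources, pushing their $q_j^{L_B}$ above $p_j^Sp_j^D\mu_j^R$. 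Concretely, take $N=2$, $K=1$, $\alpha_1=\alpha_2=1$, $p_1^Sp_1^D=p_2^Sp_2^D=p$, $\lambda_1=\epsilon$ tiny, $\lambda_2=1$: Algorithm~1 gives $q_1^{L_B}=\epsilon$ and $q_2^{L_B}=p-\epsilon$, while the symmetric randomized solution gives $\mu_1^R=\mu_2^R=1/2$, so $p\,\mu_2^R=p/2<q_2^{L_B}$. Hence the service terms do not compare cleanly source by source, and the "main obstacle" you identify cannot be overcome as stated.

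The paper sidesteps this entirely, and you can too: since $R$ minimizes \eqref{EWSAoIforG/Ber/1} over $\Pi_r$, you may upper-bound $\mathbb{E}[J^R]$ by the EWSAoI of \emph{any} feasible randomized policy. Choose $\tilde{R}$ with $\tilde{\mu}_i=q_i^{L_B}/(p_i^Sp_i^D)$, which is feasible because $q_i^{L_B}\le v_i\le p_i^Sp_i^D$ and $\sum_i q_i^{L_B}/(p_i^Sp_i^D)=K$ by construction of Algorithm~1. Plugging $\tilde{\mu}_i$ into \eqref{EWSAoIforG/Ber/1} makes the service term exactly $1/q_i^{L_B}$, so no relationship between the two water-filling allocations is needed. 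The ratio then follows by comparing with \eqref{performancelowerbound}: using $q_i^{L_B}\le\lambda_i$ one has $\tfrac{\mathbb{E}[X_i^2]\lambda_i}{2}\le\tfrac{\mathbb{E}[X_i^2]\lambda_i^2}{2\,q_i^{L_B}}$, and together with $\theta_i-1\le 2\theta_i+1$ this yields $\mathbb{E}[J^R]\le\mathbb{E}[J^{\tilde{R}}]\le\rho L_B$ with $\rho$ as in \eqref{eq:optimality}. With that substitution your remaining bookkeeping (including the use of $\mathbb{E}[X_i^2]\lambda_i^2\ge 1$ via Jensen) goes through, but as written your proof route does not close.
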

\begin{proof}
The proof of (\ref{eq:solutionRandomized}) is similar to the proof of Theorem \ref{theo:lower_bound}. Algorithm~\ref{alg:Randomized} finds the unique solution to the KKT conditions associated with~\eqref{eq:OPTRandom}.
The proof of \eqref{eq:Randomized_guarantee} and \eqref{eq:optimality} are discussed below.
Since $\{\mu_i^R\}_{i=1}^N$ minimizes (\ref{EWSAoIforG/Ber/1}), it follows that any randomized policy $r \in \Pi_r$ is such that $\mathbb{E}\left[{J}^r\right]\geq\mathbb{E}\left[{J}^R\right]$. Consider $\tilde{R}$ where $\tilde{\mu}_i = q_i^{L_B}/p^S_ip^D_i$. Substituting $\tilde{\mu}_i$ into the RHS of (\ref{EWSAoIforG/Ber/1}), we obtain:
\begin{equation} \label{performancerandomized}
        \mathbb{E}[J^{\tilde{R}}] = \frac{1}{N}  \sum_{i=1}^N \alpha_i \left(\frac{\mathbb{E}\left[X_i^{2}\right]\lambda_i}{2}+\frac{1}{q_i^{LB}}+\theta_i-1\right)
\end{equation}
%Let $\rho$ represent the optimality of Optimal Randomized Policy, i.e. $\mathbb{E}[J^R]\leq\rho L_B$, 
By comparing (\ref{performancelowerbound}) and (\ref{performancerandomized}), and defining $\rho$ as in~\eqref{eq:optimality} we establish~\eqref{eq:Randomized_guarantee}. 
% \begin{equation}\label{eq:optimality}
%     \rho = \frac{\sum_{i=1}^N\alpha_i\mathbb{E}\left[X_i^{2}\right]\lambda_i^2}{\sum_{i=1}^N\alpha_i}+2
% \end{equation}
% Substituting the corresponding PMF into \eqref{eq:optimality}, we find that the Optimal Randomized Policy is 4-optimal for Bernoulli packet generation processes and 3-optimal for Periodic packet generation processes. %Applying Jensen's inequality, we demonstrate that the minimum optimality \(\rho\) is 3. 
\end{proof}

\begin{algorithm}
\caption{Optimal Randomized Policy}\label{alg:Randomized}
\begin{algorithmic}[1]
\STATE $\tilde{\vartheta} = \sum_{i=1}^N \sqrt{p^D_i p_i^S /\alpha_i}/K^2$
and $\vartheta_i = \alpha_i/p^D_ip_i^S , \forall i$
\STATE $\vartheta =  \max\{\tilde{\vartheta}; \vartheta_i\}$
\STATE $\mu_i = \min\{1; \sqrt{\vartheta_i / \vartheta}\}, \forall i$
and $S = \sum_{i=1}^N \mu_i$
\WHILE{$S < K$ and $\vartheta > 0$}
    \STATE decrease $\vartheta$ slightly
    \STATE repeat step 3 to update $\mu_i$ and $S$
\ENDWHILE
\RETURN 
% $\vartheta^* = \vartheta$ and 
$\mu_i^{R} =\mu_i, \forall i$
\end{algorithmic}
\end{algorithm}%\vspace{-1.5\baselineskip}

% \begin{remark}
%     From Proposition~\ref{prop:EWSAoIRandomzedproposition}, we can see that the objective function in~\eqref{} and the optimal scheduling probabilities are independent of the packet generation processes. 
% \end{remark}
\begin{remark}\label{rem:periodic}
    The performance guarantee $\rho$ in~\eqref{eq:optimality} provided in Theorem~\ref{theo:stationary_performance} is valid for general renewal packet generation processes at the sources. 
    By applying Jensen's inequality to~\eqref{eq:optimality}, we can see that the minimum optimality \(\rho\) is 3.
    Furthermore, we can see that Bernoulli packet generation result in $\rho=4$ and periodic packet generation result in $\rho=3$.
    %Furthermore, applying Jensen's inequality to~\eqref{eq:optimality}, we can see that the minimum optimality \(\rho\) is 3. 
    Hence, we show that, from the class of renewal processes with packet generation rate $\lambda_i$, periodic packet generation achieves the best performance guarantee $\rho$ in terms of EWSAoI. 
    This result agrees with prior work on continuous-time queueing systems~\cite{AoIM/M/1,AoI_general} which showed that deterministic packet generation leads to a reduction in AoI. %[improve...]    
\end{remark}

\section{MMSE Estimation of AoI and System Times}\label{sec:estimators}

Several works in the literature (e.g.,~\cite{kadota2019scheduling,AoIDQN,li2024aequitas,ji2024age,kadota2019minimizing,zakeri2023minimizing,Whittle_Zhifeng,Whittle_YuPin,kadota2018scheduling,liu2023optimizing,AoI5G,liu2024optimizing,tang2020minimizing}) developed AoI-aware transmission scheduling policies that assumed that the BS had perfect (i.e., reliable and without delay) knowledge of system times at the sources and/or AoI at the destinations. In this section, we consider a network (described in Sec.~\ref{sec:system}) in which the BS has imperfect knowledge of system times and AoI. To employ AoI-aware scheduling policies in such a network, we develop MMSE estimators. In Sec.~\ref{sec:MW}, we apply these MMSE estimators to an AoI-based Max-Weight Policy and analyze its performance. %that can be used in AoI-aware scheduling policies such as~\cite{kadota2019scheduling,AoIDQN,li2024aequitas,ji2024age,kadota2019minimizing,zakeri2023minimizing,Whittle_Zhifeng,Whittle_YuPin,kadota2018scheduling,liu2023optimizing,AoI5G,liu2024optimizing,tang2020minimizing}.  

Scheduling decisions made by the BS at the beginning of slot $t$ result in transmissions of packets with system times $z_i(t)$ which, in turn, may affect the evolution of AoI in time slot $t+\theta_i+1$. According to~\eqref{hevolve}, the value of $h_i(t+\theta_i+1)$ can depend on both $z_i(t) = t - \tau^S_i(t)$ and $h_i(t+\theta_i)=t+\theta_i-\tau^D_i(t+\theta_i)$. Since at the beginning of slot $t$, the BS does not know $\tau^S_i(t)$ nor $\tau^D_i(t+\theta_i)$, we develop MMSE estimators for both quantities. Prior to developing the MMSE estimators, we derive an auxiliary algorithm that iteratively calculates the PMF for the generation time of the latest packet generated before a given slot \(\varphi \leq t\), based on the BS observation \(\mathbb{O}(t)\) defined in~\eqref{eq:observation}. Notice that this PMF is important to estimate both timestamps $\tau^S_i(t)$ and $\tau^D_i(t+\theta_i)$.

\subsection{PMF of Packet Generation}
Let $g_i\left(\varphi', \varphi |\mathbb{O}(t)\right)$ represent the probability that the latest packet generated by source $i$ before (and including) slot $\varphi$ was generated during slot $\varphi'$, given the BS observation $\mathbb{O}(t)$. 
Recall that $\delta_i[d]$ represents the first slot in which the packet with index \( d \) was successfully received by the BS. It follows that $g_i\left(\varphi', \delta_i[d] |\mathbb{O}(t)\right) $ represents the PMF of the generation time of each packet $d\in\{1,\ldots,D_i(t)\}$. Furthermore, $g_i\left(\varphi', t |\mathbb{O}(t)\right) $ represents the PMF of ${\tau}^{{S}}_i(t)$. Next, we derive an expression for the probability of a packet generation in a given slot $\varphi$ and an expression for $g_i\left(\varphi', \varphi |\mathbb{O}(t)\right)$. Algorithm~\ref{alg:arrivalalgorithmnofeedback} describes how to iteratively update both expressions over time. 

Consider the sequence of packets $d\in\{1,\ldots,D_i(t)\}$ from source $i$ successfully received by the BS up to and excluding slot $t$. From the definition of $\delta_i[d]$ and $\bar{\delta}_i[d]$ (see Fig.~\ref{fig:Transmission Outcome} for an illustration) we know that packet $d$ was generated in the interval $\left[\bar{\delta}_i[d-1]+1,\delta_i[d]\right]$, without loss of generality, we define $\bar{\delta}_i[0]=0, \forall i$. Furthermore, we know that the undelivered packet (if any) stored at source $i$ in the beginning of slot $t$ was generated in the interval $[\bar{\delta}[D_i(t)]+1,t]$. We denote by $\cup^+:=\bigcup_{d=1}^{D_i(t)-1}[\bar{\delta}[d]+1,{\delta}[d+1]]\bigcup[\bar{\delta}[D_i(t)]+1,t]$ the set of intervals in which packets can be generated.
%
% Recall that the packet generation process is a renewal process with inter-generation periods $X_i$ distributed according to the PMF $f_i(x)$. %Since the packet generation process is a renewal process, 
% Therefore, if a packet were generated in slot $\varphi'$, the probability of a packet generation in slot $t>\varphi'$ is independent of packet generations before slot $\varphi'$.  
%
% What I'm trying to convey is that $\lambda_i(\varphi)$ is determined by $\varphi,\varphi',g(\varphi',\varphi|\mathbb{O}(t))$. However we did not have $g(\varphi',\varphi|\mathbb{O}(t))$ before generating $\lambda_i(\varphi)$, so we need to use $g(\varphi',\varphi|\mathbb{O}(t))$ and $g(\varphi|\mathbb{O}(t -1))$ and $\cup^+$ to jointly denote $g(\varphi',\varphi|\\mathbb{O}(t))$.(but not for each \varphi) Then we generate $\lambda_i(\varphi)$ and $g(\varphi',\varphi|\mathbb{O}(t))$ sequentially.
%
We denote by $\lambda_i(\varphi):=\mathbb{P}\left(a_i(\varphi)=1|\mathbb{O}(t)\right)$ the conditional probability of a packet generation in slot $\varphi$. For simplicity of notation, the BS observation $\mathbb{O}(t)$ is omitted in $\lambda_i(\varphi)$. 
Naturally, $\lambda_i(\varphi)=0$ if $\varphi\notin\cup^+$. 
%There is no packet generated during $\left[\delta_i[d]+1,\bar{\delta}_i[d]\right]$, otherwise the source transmits a fresher packet at slot $\bar{\delta}_i[d]$. Therefore, $\cup^+:=\bigcup_{d=1}^{D_i(t)-1}[\bar{\delta}[d]+1,{\delta}[d+1]]\bigcup[\bar{\delta}[D_i(t)]+1,t]$ is the sample space of $\varphi$. 
The expression for $\lambda_i(\varphi)$ is as follows  
\begin{equation}\label{lambda_inofeedback}
    \lambda_i(\varphi)= \frac{\sum_{\varphi'=1}^{\varphi-1}g_i(\varphi',\varphi-1|\mathbb{O}(t-1))f_i(\varphi-\varphi')\mathbb{I}_{\varphi\in \cup^+}}{\sum_{\varphi'=1}^{\varphi-1}g_i(\varphi',\varphi-1|\mathbb{O}(t-1))\mathbb{I}_{\varphi\in \cup^+}},
\end{equation}
Recall that we assumed that every source generates a packet in slot $t=1$, hence $\lambda_i(1)=1, \forall i$. 

By definition, $g_i(\varphi',\varphi|\mathbb{O}(t))$ represents the probability of a packet generation in slot $\varphi'$ and no packet generation in the interval $(\varphi',\varphi]$. Notice that $g_i(\varphi',\varphi|\mathbb{O}(t))$ can only be non-zero when $\varphi'\in[\bar{\delta}[D_i(\varphi)-1]+1,\varphi]$. The expression for $g_i(\varphi',\varphi|\mathbb{O}(t))$ is given by %is the sample space of $\varphi'$, $g_i(\varphi',\varphi|\mathbb{O}(t))$ can be expressed as:
\begin{equation}\label{g^S_inofeedback}
g_i(\varphi',\varphi|\mathbb{O}(t))= \frac{
\lambda_i(\varphi')\prod_{\eta=\varphi'+1}^{\varphi}(1-\lambda_i(\eta))\mathbb{I}_{\varphi'>\bar{\delta}[D_i(\varphi)-1]}}{1-\prod_{\eta=\varphi'+1}^{\varphi}(1-\lambda_i(\eta))\mathbb{I}_{\varphi'>\bar{\delta}[D_i(\varphi)-1]}}    
\end{equation}

% Utilizing historic PMF of last generation time $\{g_i\left(\varphi',\varphi|\mathbb{O}(t-1)\right)\}_{\varphi'=\bar{\delta}_i[D_i(t)-1]+1}^{\varphi-1}$ and the observation $\mathbb{O}(t)$, we prove following Algorithm \ref{arrivalalgorithm}, which allows us to iteratively obtain $\lambda_i(\varphi)$  and $g_i\left(\varphi',\varphi|\mathbb{O}(t)\right)$.

Given slots $t$ and $\varphi$, such that $\varphi\leq t$, the probability $\lambda_i(\varphi)$ depends on the BS observation $\mathbb{O}(t)$ and the PMF $g_i(\varphi',\varphi-1|\mathbb{O}(t-1))$ associated with slots $\varphi'\in[\bar{\delta}_i[D_i(\varphi)-1]+1,\varphi-1]$.
In turn, 
%Notice that for each slot $\varphi$, the probability $\lambda_i(\varphi)$ depends on historic PMF $\{g_i(\varphi',\varphi|\mathbb{O}(t-1))\}_{\varphi'=\bar{\delta}_i[D_i(\varphi)-1]+1}^{\varphi-1}$ and the observation $\mathbb{O}(t)$; 
the probability $g_i(\varphi',\varphi|\mathbb{O}(t))$ depends on the PMF $\lambda_i(\eta)$ for slots $\eta\in[\bar{\delta}_i[D_i(\varphi)-1]+1,\varphi]$. Therefore, both probabilities can be iteratively obtained %the PMF of last generation time $g_i(\varphi',\varphi|\mathbb{O}(t))$ can be iteratively obtain by historic PMF and observation $\mathbb{O}(t)$, 
by following the steps outlined in Algorithm \ref{alg:arrivalalgorithmnofeedback}.

\begin{algorithm}
\caption{Iterative update of $\lambda_i(\varphi)$ and $g_i(\varphi',\varphi|\mathbb{O}(t))$}
\label{alg:arrivalalgorithmnofeedback}
\begin{algorithmic}[1]
\STATE {\bf Input:} Observation: $\mathbb{O}(t)$; Historical PMF: $g_i\left(\varphi', \varphi |\mathbb{O}(t-1)\right), \forall \varphi \leq t-1, \varphi' \leq \varphi$; 
% \STATE Firstly, replace the estimation value with the generation time based on the delayed feedback.
% \IF {$\tau^D_i(t-\omega_i)>\bar{\delta}_i[D_i(t-1)]$}
% \FOR {$\varphi'<\bar{\delta}_i[D_i(t)]$}
% \STATE $g_i(\varphi',\varphi,t-1)=\mathbb{I}_{\varphi'=\tau^D_i(t-\omega_i)}(1-\sum_{\eta=\bar{\delta}_i[D_i(t)]}^tg_i(\varphi',\varphi,t-1))$
% \ENDFOR
% \ENDIF
% \STATE Utilizing updated $g_i(\varphi',\varphi,t-1)$, $f_i$, and $\mathbb{O}(t)$ Compute $\lambda_i(\varphi)$ and $g_i(\varphi', \varphi,t)$ 
\FOR{$\varphi = \bar{\delta}_i[D_i(t)] + 1$ to $t$}
\STATE Compute $\lambda_i(\varphi)$ using (\ref{lambda_inofeedback})
\FOR{$\varphi' = \bar{\delta}_i[D_i(\varphi)-1] + 1$ to $\varphi$}
\STATE Compute $g_i(\varphi',\varphi|\mathbb{O}(t))$ using (\ref{g^S_inofeedback})
\ENDFOR
\STATE Store $\lambda_i(\varphi)$ and $g_i(\varphi', \varphi | \mathbb{O}(t))$ for all $\varphi' \leq \varphi$
\ENDFOR
\STATE {\bf Output:}  $g_i(\varphi', \varphi | \mathbb{O}(t))$ for all $\varphi \leq t, \varphi' \leq \varphi$
\end{algorithmic}
\end{algorithm}

Notice that Algorithm~\ref{alg:arrivalalgorithmnofeedback} does not take into account the delayed feedback received from the destinations. %In the case of finite feedback delay, 
A feedback reception in slot~$t$ would provide information about the updated destination timestamp \(\tau^D_i(t-\omega_i)\) which implies that \(\lambda_i\left(\tau^D_i(t-\omega_i)\right)=1\). 
This feedback would affect the probabilities \(\lambda_i(\varphi)\) and \(g_i(\varphi', \varphi|\mathbb{O}(t))\) in every slot \(\varphi \in [\tau^D_i(t-\omega_i), t]\). 
Therefore, upon receiving feedback from a destination, the BS employs Algorithm \ref{alg:arrivalalgorithmnofeedback} to recompute \(\lambda_i(\varphi)\) and \(g_i(\varphi', \varphi|\mathbb{O}(t))\) starting from \(\varphi=\tau^D_i(t-\omega_i)\).

\subsection{MMSE Estimators}
% The estimated value of $\tau^D_i(t+\theta_i)$ is jointly determined by two types of observation: delayed feedback and transmission outcome. In this subsection, we first analyze the MMSE estimation given the transmission outcome \(\mathbb{T}_i(t), \mathbf{c}^S_i(t)\), and \(\mathbf{r}(t)\), then present the MMSE estimation given \(\mathbb{O}_i(t)\).
In this section, we derive expressions for the MMSE estimators~\cite{stochetic} of $\tau^S_i(t)$ and $\tau^D_i(t+\theta_i)$ defined as $\hat{\tau}^S_i(t):=\mathbb{E}[\tau^{{S}}_i(t) | \mathbb{O}(t)]$ and $\hat{\tau}^D_i(t+\theta_i):=\mathbb{E}[\tau^D_i(t+\theta_i) | \mathbb{O}(t)]$, respectively. It follows that the MMSE estimators of $h_i(t+\theta_i)$ and $z_i(t)$ are given by
\begin{equation}\label{MMSEh}
    \hat{h}_i(t+\theta_i) = \mathbb{E}[h_i(t+\theta_i) | \mathbb{O}(t)] = t+\theta_i-\hat{\tau}^D_i(t+\theta_i)
\end{equation}
\begin{equation}\label{MMSEz}
    \hat{z}_i(t) = \mathbb{E}[z_i(t) | \mathbb{O}(t)] = t-\hat{\tau}^S_i(t)
\end{equation}
To reduce the complexity in computing $\hat{\tau}^S_i(t)$ and $\hat{\tau}^D_i(t+\theta_i)$, we develop Algorithm \ref{MWw/Estimationnofeedback} that iteratively computes the timestamp estimators over time. %$\hat{\tau}^S_i(t)$  and $\hat{\tau}^D_i(t+\theta_i)$ over time.

The estimator $\hat{\tau}^S_i(t)$ is based on $g_i(\varphi', t | \mathbb{O}(t))$. %, r
Recall that $g_i(\varphi', t | \mathbb{O}(t))$ is the PMF of ${\tau}^{{S}}_i(t)$ given $\mathbb{O}(t)$. Therefore, the MMSE estimator of $\tau^{{S}}_i(t)$  is given by
\begin{equation}\label{hattau^Snofeedback}    
    \hat{\tau}^{{S}}_i(t) = \sum_{\varphi'=\bar{\delta}_i[D_i(t)-1]+1}^t \varphi' g_i(\varphi',t| \mathbb{O}(t)) 
\end{equation}

The estimator \(\hat{\tau}^D_i(t+\theta_i)\) is based on \( g_i(\varphi', \varphi | \mathbb{O}(t)) \). Using \( g_i(\varphi', \varphi | \mathbb{O}(t)) \), the BS estimates the generation time of each received packet $d\in\{1,\ldots,D_i(t)\}$  from source \(i\). %with the observation \(\mathbb{O}(t)\) by \( g_i(\varphi', \varphi | \mathbb{O}(t)) \). 
However, since the channel between the BS and the destinations is unreliable, the latest packet received by the BS, i.e., the packet with index $D_i(t)$, may not be the latest packet received by destination $i$. %due to the unreliable channel between the BS and the destination, the index of the last delivered packet is not deterministic. 
Therefore, \( \hat{\tau}^D_i(t+\theta_i) \) should account for previous packets received by the BS. %can be jointly expressed by all transmitted packets prior to slot $t$.
% Consequently, we denote the MMSE estimation using the values of each estimation component and belief. 
% Then, by analyzing the observation, we provide a solution for belief.
%
Let \(\hat{\tau}^{{D}}_i[d]\) represent the MMSE estimator of the generation time of packet $d\in\{1,\ldots,D_i(t)\}$ received by the BS from source \(i\). % given the observation \(\mathbb{O}(t)\). 
Recall that \(g_i(\varphi', {\delta[d]} | \mathbb{O}(t))\) is the PMF of \(\hat{\tau}^{{D}}_i[d]\). Thus, \(\hat{\tau}^{{D}}_i[d]\) is given by
\begin{equation}\label{hattau^Ddnofeedback}
    \hat{\tau}^{{D}}_i[d] = \sum_{\varphi'=\bar{\delta}_i[d-1]+1}^{\delta[d]} \varphi' g_i(\varphi', {\delta[d]} | \mathbb{O}(t))
\end{equation}
Denote by \({b}_i[d]\) the probability that packet $d\in\{1,\ldots,D_i(t)\}$ is the latest packet delivered to destination $i$. The belief \({b}_i[d]\) is given by
\vspace{-0.5\baselineskip}
\begin{equation}
    {b}_i[d] = (1-(1-p^D_i)^{|{\mathcal{T}}_i[d]|} )\prod_{d'=d+1}^{D_i(t)} (1-p^D_i)^{|{\mathcal{T}}_i[d']|}
\end{equation}
\vspace{-0.5\baselineskip}
% assigned to \(\hat{\tau}^{{D}}_i[d]\), such that \(\sum_{d=1}^{D_i(t)}{b}_i[d]\leq1\). 
% The term \(\bar{b}_i := \mathbb{P}(\tau^D_i(t+\theta_i)=\tau_i^D(t-\omega_i))=1-\sum_{d=1}^{D_i(t)}{b}_i[d]\) represents the belief of \(\tau_i^D(t-\omega_i)\). 
Therefore, the MMSE estimator of \(\tau^D_i(t+\theta_i)\) is given by
% \begin{equation}\label{hattau^Drenew}
% \begin{aligned}
%     \hat{\tau}^D_i(t&+\theta_i) = \tau_i^D(t-\omega_i) \\&+\sum_{d=D_i(t-\theta_i-\omega_i)}^{D_i(t)} \left(\hat{\tau}^{{D}}_i[d] - \tau_i^D(t-\omega_i)\right) {b}_i[d],
% \end{aligned}    
% \end{equation}
\begin{equation}\label{hattau^Drenewnofeedback}
\hat{\tau}^D_i(t+\theta_i) = \sum_{d=1}^{D_i(t)} \hat{\tau}^{{D}}_i[d] {b}_i[d],
\end{equation}
\vspace{-0.7\baselineskip}
From (\ref{eq:D_i(t)}), (\ref{Tevolves}), and (\ref{hattau^Drenewnofeedback}), we can see that $\hat{\tau}^{{D}}_i(t+\theta_i)$ can be iteratively computed as follows. If $u_i(t-1)c^S_i(t-1)=1$, i.e., the BS successfully receives a packet from source $i$ during slot $t$, then:
\begin{equation}\label{hattau^D iter nofeedback}
\begin{aligned}
\hat{\tau}^{{D}}_i(t+\theta_i)=&(1-p_i^D)\mathbb{E}\left[\tau^{{D}}_i(t+\theta_i-1)|\mathbb{O}(t-1)\right]+\\&+p^D_i\hat{\tau}^D_i[D_i(t)] \; .
\end{aligned}
\end{equation}
Otherwise, if the BS does not receive a packet from source $i$ during slot $t$, then:
\begin{equation}\label{hattau^D iter nofeedback1}
\hat{\tau}^{{D}}_i(t+\theta_i)=\mathbb{E}\left[\tau^{{D}}_i(t+\theta_i-1)|\mathbb{O}(t-1)\right] \; .
\end{equation}
%expressed by $\bar{\tau}^{{D}}_i$, $\hat{\tau}^D_i[D_i(t)]$, and $c^S_i(t)$ as: 
% \begin{equation}\label{hattau^D iter nofeedback}
% \begin{aligned}
%  \hat{\tau}&^{{D}}_i(t+\theta_i) = \\
% &\begin{cases}
%     \bar{\tau}^{{D}}_i ,&\text{if $u_i(t-1)c^S_i(t-1)=0$}\\
%     \bar{\tau}^{{D}}_i(1-p_i^D)+\hat{\tau}^D_i[D_i(t)]p^D_i,&\text{if $u_i(t-1)c^S_i(t-1)=1$}
% \end{cases}  
% \end{aligned}
% \end{equation}
\begin{remark}
The iterative computation of $\hat{\tau}^{{S}}_i(t)$ and $\hat{\tau}^{{D}}_i(t+\theta_i)$ is described in Algorithm~\ref{MWw/Estimationnofeedback}. Algorithm~\ref{MWw/Estimationnofeedback} has a computational complexity of \(\mathcal{O}(N^2)\). For specific distributions of the packet generation process, the complexity can be further reduced. For example, for Bernoulli packet generation processes, \(\lambda_i(t)=\lambda_i, \forall t\), and for periodic packet generation processes with period \(\varGamma\), \(\lambda_i(t) = \mathbb{I}_{\{t\bmod\varGamma\}}\). This allows us to find closed-form expressions for $\hat{\tau}^{{S}}_i(t)$ and \(\hat{\tau}^D_i[d]\), thereby reducing the computational complexity from \(\mathcal{O}(N^2)\) to \(\mathcal{O}(1)\) for both distributions.
\end{remark}
%can also be further simplified by providing a closed-form expectation. For Bernoulli packet generation processes with a truncated geometric distribution, and for periodic packet generation processes, \(\tau^D_i[d] = \max_{t \leq \delta[d]} \{t\bmod\varGamma\} = 0\). Ultimately, the computational complexity can be reduced to \(\mathcal{O}(1)\) for both examples.
% Thus, at each given time slot, given the network parameters \(p^S_i, p^D_i, f_i\), system observation $\mathbb{O}(t)$, historical estimation $\bar{\tau}^{{D}}_i$ and PMF $g_i\left(\varphi', \varphi | \mathbb{O}(t-1)\right), \forall \varphi \leq t-1, \varphi' \leq \varphi$, the MMSE estimator is provided by Algorithm \ref{MWw/Estimationnofeedback}. 
\begin{algorithm}
\caption{MMSE Estimator of ${\tau}^{{S}}_i(t)$ and ${\tau}^{{D}}_i(t+\theta_i)$}
\label{MWw/Estimationnofeedback}
\begin{algorithmic}[1]
\STATE {\bf Input:} Observation: $\mathbb{O}(t)$; Historical PMF: $g_i(\varphi', \varphi | \mathbb{O}(t-1)), \forall \varphi \leq t-1, \varphi' \leq \varphi$; Historical Estimation: $ \mathbb{E}\left[\tau^{{D}}_i(t+\theta_i-1)|\mathbb{O}(t-1)\right]$
\STATE Update $\lambda_i(\varphi)$ and $g_i(\varphi',\varphi,t)$ using Algorithm \ref{alg:arrivalalgorithmnofeedback}
\STATE Compute $\hat{\tau}^{{S}}_i(t)$ using  (\ref{hattau^Snofeedback})
%\STATE \textbf{Step 3:} Compute $\hat{\tau}^{{D}}_i(t+\theta_i)$
\STATE Update $D_i(t)$ using (\ref{eq:D_i(t)})
\STATE Compute $\hat{\tau}^D_i[D_i(t)]$ using (\ref{hattau^Ddnofeedback})
\STATE Update $\hat{\tau}^{{D}}_i(t+\theta_i)$ using (\ref{hattau^D iter nofeedback}) and (\ref{hattau^D iter nofeedback1})
% \IF{$c^S_i(t)=1$}
% \STATE $\hat{\tau}^{{D}}_i(t+\theta_i)\leftarrow\bar{\tau}^{{D}}_i(1-p_i^D)+\hat{\tau}^D_i[D_i(t)]p^D_i$
% \ELSE 
% \STATE $\hat{\tau}^{{D}}_i(t+\theta_i)\leftarrow\bar{\tau}^{{D}}_i$
% \ENDIF
% \STATE Update $\tilde{\tau}^{{D}}_i$, $\tilde{\tau}^{{D}}_i[D_i(t)]$ using  (\ref{tildetau^D}), (\ref{tildetau^DD_i(t)})
\STATE {\bf Output:} $\hat{\tau}^{{S}}_i(t)$, $\hat{\tau}^{{D}}_i(t+\theta_i)$, $g_i(\varphi', \varphi | \mathbb{O}(t)),\forall\varphi \leq t, \varphi' \leq \varphi$
\end{algorithmic}
\end{algorithm}

Notice that Algorithm~\ref{MWw/Estimationnofeedback} does not take into account the delayed feedback received from the destinations. %In the case of finite feedback delay, 
Upon receiving feedback from a destination, the BS employs Algorithm~\ref{alg:arrivalalgorithmnofeedback} to recompute \(\lambda_i(\varphi)\) and \(g_i(\varphi', \varphi|\mathbb{O}(t))\).
Furthermore, 
%For estimation with feedback, when new feedback is received, i.e., when \(\tau^D_i(t-\theta_i-\omega_i) > \tau^D_i(t-\theta_i-\omega_i-1)\), the estimator requires the recalculation of \(\lambda_i(\varphi)\) and  \(g_i(\varphi', \varphi|\mathbb{O}(t))\) at each slot \(\varphi \in [t-\theta_i-\omega_i, t]\). Additionally, 
only packets transmitted after \(\tau^D_i(t-\theta_i-\omega_i)\) will impact the estimation. Therefore, the BS replaces \(\hat{\tau}^D_i[d]\) for all \(d \leq D_i(t-\theta_i-\omega_i)\) with \(\tau^D_i(t-\theta_i-\omega_i)\), recalculates the $\hat{\tau}^D_i[d]$ for all \(d > D_i(t-\theta_i-\omega_i)\) and the belief of the packet with index \(D_i(t-\theta_i-\omega_i) + 1\). %Consequently, the computational complexity increases to \(\mathcal{O}(\omega_i)\) for large \(\omega_i\).

\section{Max-Weight Policy with Estimation}\label{sec:MW}
In this section, we develop an AoI-aware Max-Weight (MW) Policy based on Lyapunov Optimization~\cite{neely2022stochastic} that leverages both the MMSE estimators $\hat{\tau}^S_i(t)$ and $\hat{\tau}^D_i(t+\theta_i)$. %at each time slot, ultimately resulting in significant performance improvements for the EWSAoI. 
Then, we derive performance guarantees for this MW policy with estimation. 

%\subsection{Max-Weight Policy}
The Max-Weight Policy is designed to reduce the expected drift of the Lyapunov Function at every slot \(t\). We denote the linear Lyapunov Function as follows
\begin{equation}
    L(t) = \frac{1}{N}\sum_{i=1}^{N}\beta_i h_i(t)
\end{equation}
where $\beta_i>0$ is a parameter that can tuned to different network configurations. Recall that scheduling decisions in slot $t$ affect the AoI in slot $t+\theta_i+1$. Hence, we define the one-slot Lyapunov Drift as
%At each time slot \(t\), the BS schedules transmissions selecting a set of \(u_i(t)\) for all \(i\). These decisions will affect the Age of Information of each source \(i\) in the time slot \(t+\theta_i+1\), indicated by \(h_i(t+\theta_i+1)\). Analyzing the 1-slot drift, we obtain the Lyapunov Drift as:
\begin{equation}\label{1-slot drift}
\begin{aligned}
    \Delta&(\mathbb{O}(t)) = \mathbb{E} \left[ L(t+\theta_i+1) - L(t+\theta_i) | \mathbb{O}(t) \right] \\
    &= \frac{1}{N}\sum_{i=1}^N\beta_i \mathbb{E} \left[ h_i(t+\theta_i+1) - h_i(t+\theta_i) | \mathbb{O}(t) \right]
\end{aligned}
\end{equation}
where $\mathbb{O}(t)$ is the BS observation at the beginning of slot $t$.
%
% From \ref{hevolve}, the value of $h_i(t+\theta_i+1) - h_i(t+\theta_i)$ rely on \(z_i(t)\) and ${h}_i(t+\theta_i)$.
% In our model, the \(z_i(t)\) is not available due to channel interference. Moreover, the ${h}_i(t+\theta_i)$ is future information which is also not available. Therefore, in the network model we consider, the BS needs to perform estimation and prediction of current and future AoI through observation \(\mathbb{O}(t)\) to realize dynamic scheduling policies.
% Denote \(\hat{h}_i(t+\theta_i)\) as the MMSE estimation of \(h_i(t+\theta_i)\) given the observation \(\mathbb{O}(t)\). Denote \(\hat{z}_i(t)\) as the MMSE estimation of \(z_i(t)\) given the observation \(\mathbb{O}(t)\). The estimators are given by:
% \begin{equation}\label{MMSEh}
%     \hat{h}_i(t+\theta_i) = \mathbb{E}[h_i(t+\theta_i) | \mathbb{O}(t)] 
% \end{equation}
% \begin{equation}\label{MMSEz}
%     \hat{z}_i(t) = \mathbb{E}[z_i(t) | \mathbb{O}(t)]
% \end{equation}
%
Substituting the evolution of \(h_i(t)\) from (\ref{hevolve}) into (\ref{1-slot drift}) and then manipulating the resulting expression, %since \(u_i(t)\) is independent of \(\hat{h}_i(t+\theta_i)\) and \(\hat{z}_i(t)\) given the observation \(\mathbb{O}(t)\), 
we obtain (\ref{Lyapunov Drift}). Notice that, given the observation \(\mathbb{O}(t)\), scheduling decisions \(u_i(t)\) are independent of the MMSE estimators \(\hat{h}_i(t+\theta_i)\) and \(\hat{z}_i(t)\). 

\begin{figure*}[htbp]
\begin{equation}\label{Lyapunov Drift}
\begin{aligned}
    \Delta \left( \mathbb{O}(t)\right) &= \frac{1}{N} \sum_{i=1}^N  \beta_i -\frac{1}{N} \sum_{i=1}^N \beta_i p^S_ip^D_i \mathbb{E}\left[\left(h_i(t+\theta_i) - z_i(t)-\theta_i\right) u_i(t) | \mathbb{O}(t)\right] \\
    &=\frac{1}{N} \sum_{i=1}^N  \beta_i -\frac{1}{N} \sum_{i=1}^N  \beta_i p^S_ip^D_i (\hat{h}_i(t+\theta_i) - \hat{z}_i(t)-\theta_i) \mathbb{E}\left[u_i(t) | \mathbb{O}(t)\right]
\end{aligned}
\end{equation}
\hrulefill
\end{figure*}

To minimize the one-slot Lyapunov Drift in (\ref{Lyapunov Drift}), the MW policy selects, in each slot \(t\), the \(K\) sources with highest values of \(\beta_i p^S_ip^D_i (\hat{h}_i(t+\theta_i) - \hat{z}_i(t)-\theta_i)\), with ties broken arbitrarily. 
%Since the estimation is subject to inaccuracies, the MW policy is not work-conserving. The MW policy is work-conserving unless the estimation is accurate when there is a deterministic arrival of each source and either the channel reliabilities $p^D_i=1, \forall i$ or both the transmission delay and the feedback delay subject to $\theta_i=\omega_i=0,\forall i$. 
For the case of no feedback from the destinations, i.e., \(\omega_i\rightarrow\infty,\forall i\), the MMSE estimator $\hat{h}_i(t+\theta_i)$ estimates AoI without actually receiving any information about the AoI at the destination. 
To the best of our knowledge, this is the first AoI-aware scheduling policy that attempts to minimize AoI without requiring any AoI knowledge.
Next, we derive performance guarantees for this MW policy.

% In order to obtain the $\hat{\tau}^S_i(t)$ and $\hat{\tau}^D_i(t+\theta_i)$, by definition, we need to obtain their PMFs separately. Next, we derive the PMF of the generation time of each packet given the observation $\mathbb{O}(t)$ to estimate $\tau^S_i(t)$ and $\tau^D_i(t+\theta_i)$.

% Hence, the Max-Weight Policy is not work-conserving, unless there is a deterministic arrival of every source. 

% Existing dynamic scheduling policies mainly rely on instantaneous AoI knowledge, i.e., \(\{h_i(t)\}_{i=1}^N\) and \(\{z_i(t)\}_{i=1}^N\). However, this assumption is not realistic due to transmission delays and channel interference. 

% Moreover, even if the instantaneous AoI is available, it may not provide accurate information in the presence of transmission delays because the AoI at the destination does not account for possible ongoing transmissions within \(\theta_i\). Therefore, in the network model we consider, the BS needs to perform estimation and prediction of current and future AoI through imperfect information observed in the network to realize dynamic scheduling policies.

%\subsection{Performance Analysis}

%Next, we derive performance guarantees for the MW policy. 
Theorem~\ref{theo:upperboundMW} provides an upper bound on the EWSAoI achieved by the MW policy. This upper bound is tighter than the bound provided in~\cite{liu2024optimizing} for the special case of perfect knowledge of AoI, i.e.,  \(\theta_i=0,p^D_i=1,\forall i\). Theorem~\ref{theo:upperboundMW} holds not only for the case of perfect knowledge of AoI, but also for imperfect knowledge, i.e., \(\theta_i>0, p^D_i<1, \omega_i<\infty\), and no knowledge, i.e., \(\omega_i\rightarrow\infty\). Theorem~\ref{theo:upperboundMW} and the simulation results discussed in Sec.~\ref{sec:simulation} suggest that MW outperforms the Optimal Randomized Policy in terms of EWSAoI even for the case of no knowledge of AoI. 

%which improves the upper bound at the case of \(\theta_i=0,p^D_i=1,\forall i\) given in \cite{liu2024optimizing}. Note that Theorem \ref{theo:upperboundMW} holds true even if $\omega_i\rightarrow \infty$, which shows that MW still outperforms optimal randomized policy in the case of no feedback.

\begin{theorem} \label{theo:upperboundMW}
The EWSAoI of the Max-Weight Policy is upper bounded by the EWSAoI of the Optimal Randomized Policy, which is formally expressed as
\begin{equation}
    L_B \leq \mathbb{E}\left[ J^{MW} \right] \leq \mathbb{E}\left[ J^{R} \right] \leq \rho L_B 
\end{equation}
where $\rho$ represents the optimality ratio given by Theorem \ref{theo:stationary_performance}.
\end{theorem}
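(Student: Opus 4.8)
The plan is to use Lyapunov drift analysis, comparing the drift induced by the Max-Weight policy against the drift induced by the Optimal Randomized policy $R$. First I would start from the one-slot Lyapunov drift expression in~\eqref{Lyapunov Drift}, evaluated under the Max-Weight scheduling decisions $u_i^{MW}(t)$. By construction, the Max-Weight policy selects the $K$ sources maximizing $\beta_i p_i^S p_i^D(\hat h_i(t+\theta_i)-\hat z_i(t)-\theta_i)$, so for \emph{any} admissible decision vector $\mathbf{u}$ with $\sum_i u_i\le K$ --- and in particular for the (possibly randomized) decisions of policy $R$ --- we have the pointwise inequality $\sum_i \beta_i p_i^S p_i^D(\hat h_i(t+\theta_i)-\hat z_i(t)-\theta_i)u_i^{MW}(t) \ge \sum_i \beta_i p_i^S p_i^D(\hat h_i(t+\theta_i)-\hat z_i(t)-\theta_i)\mathbb{E}[u_i^R(t)\mid\mathbb{O}(t)]$. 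Taking expectations, this gives $\Delta^{MW}(\mathbb{O}(t)) \le \Delta^R(\mathbb{O}(t))$, i.e., the Max-Weight drift is no larger than the drift of $R$ at every observation.

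Next I would turn this drift comparison into a bound on the time-average AoI. Summing the drift inequality over $t=1,\ldots,T$, telescoping the left-hand side (so that the sum of drifts equals $\mathbb{E}[L(T+\theta_i+1)-L(1+\theta_i)]$, which is $O(1)$ divided by $T$ once normalized, since AoI growth is controlled), and dividing by $T$ and letting $T\to\infty$, I would obtain a relation of the form $\liminf_{T\to\infty}\frac{1}{TN}\sum_t\sum_i \beta_i p_i^S p_i^D\,\mathbb{E}[(h_i(t+\theta_i)-z_i(t)-\theta_i)u_i^{MW}(t)] \ge (\text{same quantity under }R)$. The key supporting fact is that under the stationary randomized policy $R$, the terms $\mathbb{E}[(h_i(t+\theta_i)-z_i(t)-\theta_i)u_i^R(t)]$ can be computed in closed form (using that $u_i^R(t)$ is independent of the state and that the renewal/service-time structure gives the stationary averages in Proposition~\ref{prop:EWSAoIRandomzedproposition}), and they tie back exactly to $\mathbb{E}[J^R]$. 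One then chooses the Lyapunov weights $\beta_i$ to match the structure of the EWSAoI objective --- the natural choice is $\beta_i$ proportional to $\alpha_i$ --- so that the drift inequality, after rearrangement, yields $\mathbb{E}[J^{MW}]\le \mathbb{E}[J^R]$. The outer inequalities $L_B\le\mathbb{E}[J^{MW}]$ and $\mathbb{E}[J^R]\le\rho L_B$ are immediate from Theorem~\ref{theo:lower_bound} and Theorem~\ref{theo:stationary_performance}, respectively.

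The main obstacle I anticipate is the step that converts the per-slot drift inequality into a genuine comparison of the \emph{time-average} AoI under the two policies. Under Max-Weight the process $\mathbf{h}(t)$ is not stationary and need not even be Markov in a low-dimensional state (the estimators $\hat h_i,\hat z_i$ depend on the full observation history $\mathbb{O}(t)$), so one must argue that the Lyapunov function has bounded expected growth --- equivalently that $\mathbb{E}[h_i^{MW}(t)]$ grows sublinearly, which requires showing Max-Weight stabilizes the AoI (no source is starved indefinitely, using $p_i^S,p_i^D>0$ and $\mu_i^R>0$). A careful treatment needs a $T$-slot or frame-based drift argument and a uniform integrability / boundedness argument on $L(t)/t$. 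The subtlety is compounded by the delay $\theta_i$ and the fact that the drift is defined between slots $t+\theta_i$ and $t+\theta_i+1$ rather than $t$ and $t+1$; one must be careful that the telescoping still works and that the $\theta_i$-shifted indexing does not introduce uncontrolled boundary terms. I expect the paper handles this by a standard Lyapunov-optimization argument (as in~\cite{neely2022stochastic,kadota2019minimizing}) with the randomized policy $R$ serving as the stationary comparison policy whose existence certifies the needed bound.
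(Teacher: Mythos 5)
Your skeleton is the same as the paper's: bound the Max-Weight drift by the drift obtained when the weights $\beta_i p_i^S p_i^D(\hat h_i(t+\theta_i)-\hat z_i(t)-\theta_i)$ are paired with the Optimal Randomized probabilities $\mu_i^R$, telescope, let $T\to\infty$, and invoke Theorems~\ref{theo:lower_bound} and~\ref{theo:stationary_performance} for the outer inequalities. However, there is one concrete misstep and one missing closing mechanism. The weight choice $\beta_i\propto\alpha_i$ does not close the argument: after rearranging the drift inequality, the AoI term you control is $\sum_i \beta_i p_i^S p_i^D \mu_i^R\,\mathbb{E}[\hat h_i(\cdot)]$, so with $\beta_i\propto\alpha_i$ you bound a sum weighted by $\alpha_i p_i^S p_i^D\mu_i^R$, not the EWSAoI. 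The paper's choice is $\beta_i=\alpha_i/(p_i^S p_i^D\mu_i^R)$, which simultaneously (a) turns the scheduling-gain term into exactly $\alpha_i\,\mathbb{E}[\hat h_i(\cdot)]$ and (b) makes the constant $\frac{1}{N}\sum_i\beta_i$ equal the $\frac{1}{N}\sum_i\alpha_i/(p_i^S p_i^D\mu_i^R)$ piece of $\mathbb{E}[J^R]$ in~\eqref{EWSAoIforG/Ber/1}. Relatedly, the final identification with $\mathbb{E}[J^R]$ does not go through ``computing $\mathbb{E}[(h_i-z_i-\theta_i)u_i^R]$ under policy $R$'': the drift bound keeps the Max-Weight trajectory throughout. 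What closes it is the tower property $\mathbb{E}[\hat h_i(t)]=\mathbb{E}[h_i(t)]$ and $\mathbb{E}[\hat z_i(t)]=\mathbb{E}[z_i(t)]$ (so the MMSE estimators disappear in expectation), plus the fact that the source system time $z_i(t)$ is policy-independent with renewal time-average $\mathbb{E}[X_i^2]\lambda_i/2-1$; the resulting right-hand side then coincides term-by-term with Proposition~\ref{prop:EWSAoIRandomzedproposition}.

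Your anticipated ``main obstacle'' (proving Max-Weight stabilizes the AoI, sublinear growth of $\mathbb{E}[h_i^{MW}(t)]$, uniform integrability of $L(T)/T$) is not actually needed here. After telescoping, the terminal term $\mathbb{E}[L(T+\theta+1)]/T$ enters with a sign that allows it to be dropped using only $L(\cdot)\ge 0$, and the initial term vanishes because the initial conditions are finite; the $\theta_i$-shift only displaces finitely many slots and does not affect the Cesàro limit. So the standard one-slot drift argument with $R$ as the comparison policy suffices, exactly as in~\cite{neely2022stochastic,kadota2019minimizing}, without any separate stability or frame-based argument.
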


\begin{proof}
The MW policy is designed to minimize the Lyapunov Drift (\ref{Lyapunov Drift}). By comparing the drift of MW policy with the drift of the Optimal Randomized Policy, we obtain
\vspace{-0.5\baselineskip}%ensuring the policy's scheduling efficiency by considering the observation $\mathbb{O}(t)$. Then we formulate the inequality
\begin{equation}
\begin{aligned}
     &\Delta \left( \mathbb{O}(t) \right) = \frac{1}{N}\sum_{i=1}^{N}\beta_{i}  \\&-\frac{1}{N}\sum_{i=1}^{N}\beta_{i}p^S_ip^D_i\left(\hat{h}_{i}(t+\theta_i) - \hat{z}_{i}(t)-\theta_i\right)\mathbb{E}\left[\mu_{i} |\mathbb{O}(t)\right]  \\&
     \leq  \frac{1}{N}\sum_{i=1}^{N}\beta_{i}  -\frac{1}{N}\sum_{i=1}^{N}\beta_{i}p^S_ip^D_i\left(\hat{h}_{i}(t+\theta_i) - \hat{z}_{i}(t)-\theta_i\right)\mu_{i}^R
\end{aligned}
\end{equation}

Taking the expectation with respect to \( \mathbb{O}(t) \) and computing the average over the time-horizon yields:
    \begin{equation}
    \begin{aligned}
     &\frac{\mathbb{E}[L(T+1)]}{T}-\frac{\mathbb{E}[L(1)]}{T} \leq \frac{1}{N} \sum_{i=1}^{N} \beta_i\\ & - \frac{1}{TN} \sum_{t=1}^{T} \sum_{i=1}^{N} \beta_i p^S_ip^D_i \left(\hat{h}_{i}(t+\theta_i) - \hat{z}_{i}(t)-\theta_i \right) \mu_i^R
    \end{aligned}
    \end{equation}

Substituting \( \beta_i = \alpha_i /(p^S_ip^D_i\mu_i^R) \), taking the limit as \( T \to \infty \), and then replacing \(\mathbb{E}[ \hat{h}_{i}(t+\theta_i) ]\) with \(\mathbb{E}[ \hat{h}_{i}(t) ]\) yields:
    \begin{equation}\label{between1}
    \begin{aligned}
    \lim_{T \to \infty}\frac{1}{TN} &\sum_{t=1}^{T} \sum_{i=1}^{N}  \alpha_i \mathbb{E}\left[ \hat{h}_{i}(t) \right]   \leq \frac{1}{N} \sum_{i=1}^{N} \frac{\alpha_i }{p^S_ip^D_i\mu_i^R}\\ & - \lim_{T \to \infty}\frac{1}{TN} \sum_{t=1}^{T} \sum_{i=1}^{N} \alpha_i \left(\mathbb{E}\left[ \hat{z}_{i}(t) \right]  +\theta_i\right)
    \end{aligned}
    \end{equation}

Applying the law of iterated expectations, we establish
\begin{equation}\label{E[h]}
\mathbb{E} [\hat{h}_i(t)] = \mathbb{E}\left[ \mathbb{E}\left[ h_{i}(t) | \mathbb{O}(t) \right] \right] = \mathbb{E} [h_i(t)]
\end{equation}
\begin{equation}\label{E[z]}
    \mathbb{E}\left[ \hat{z}_{i}(t) \right] = \mathbb{E}\left[ \mathbb{E}\left[ z_{i}(t) | \mathbb{O}(t) \right] \right] =\mathbb{E}\left[ z_{i}(t) \right]
\end{equation}
% , since we define the MMSE estimation in (\ref{MMSEh}),(\ref{MMSEz}) and use Law of Total Expectation for the substitution between (\ref{between1}) and (\ref{between2})
which leads to
    \begin{equation}\label{between2}
    \begin{aligned}
     \mathbb{E}\left[ {J}^{MW} \right]& \leq \frac{1}{N} \sum_{i=1}^{N} \frac{\alpha_i}{p^S_ip^D_i\mu_i^R} \\ & - \lim_{T \to \infty}\frac{1}{TN} \sum_{t=1}^{T} \sum_{i=1}^{N} \alpha_i \left(\mathbb{E}\left[ z_{i}(t) \right]  +\theta_i\right)
    \end{aligned}
    \end{equation}

%Notice that \eqref{between1}-\eqref{between2} supplement the missing steps between (48) and (49) of \cite{ji2024age}.

Substituting
    \begin{equation}
        \lim_{T \to \infty}\frac{1}{TN} \sum_{t=1}^{T} \sum_{i=1}^{N} \alpha_i \mathbb{E}\left[ z_{i}(t) \right] = \frac{1}{N} \sum_{i=1}^{N} \alpha_i \left(\frac{\mathbb{E}\left[X_i^{2}\right]\lambda_i}{2}-1\right)
    \end{equation}
into~\eqref{between2} gives
%Culminating in the final expression:
    \begin{equation}
    \begin{aligned}
        \mathbb{E}&\left[ {J}^{MW} \right]  \leq  \frac{1}{N} \sum_{i=1}^{N} \frac{\alpha_i}{ p^S_ip^D_i\mu_i^R}  \\&\quad+ \frac{1}{N} \sum_{i=1}^{N} \alpha_i \left(\frac{\mathbb{E}\left[X_i^{2}\right]\lambda_i}{2}+\theta_i-1\right) =\mathbb{E}\left[ J^{R} \right]
    \end{aligned}
    \end{equation}
\end{proof}
This result generalizes~\cite{kadota2019minimizing,ji2024age} to the more challenging network model with: (i) arbitrary packet generation processes at the sources; (ii) delayed and unreliable knowledge of the AoI at the destinations and timestamps at the sources; and (iii) transmission scheduling policies that can select multiple sources in each slot~$t$. 

\begin{figure*}[ht]
    \centering
    \begin{subfigure}[t]{0.32\textwidth}
        \centering
        \includegraphics[width=\textwidth]{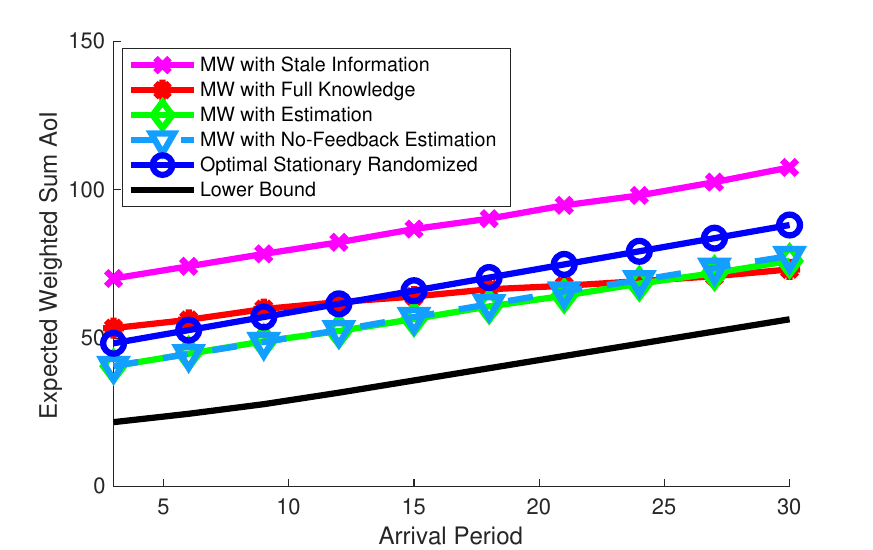}
        \caption{Uniform inter-generation periods}
        \label{fig:performance varying interarrival time uni}
    \end{subfigure}
    \hfill
    \begin{subfigure}[t]{0.32\textwidth}
        \centering
        \includegraphics[width=\textwidth]{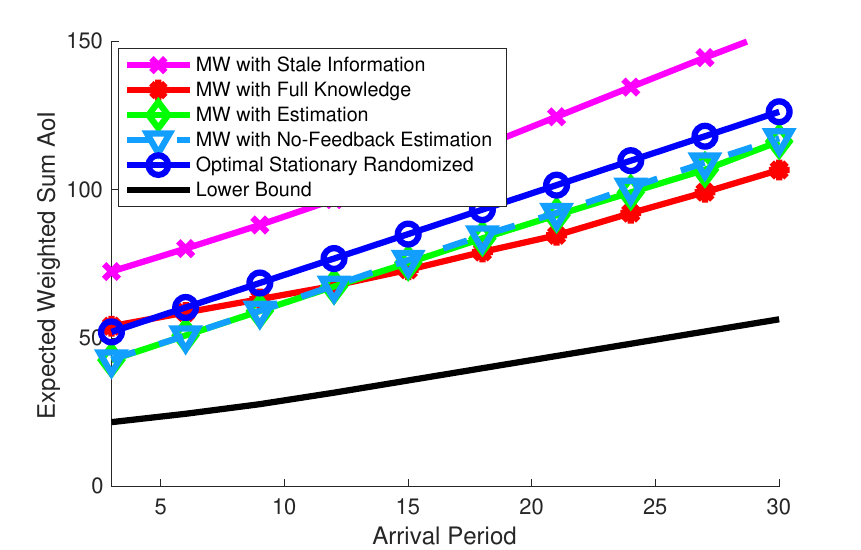}
        \caption{Bernoulli packet generation processes}
        \label{fig:performance varying interarrival time ber}
    \end{subfigure}
        \hfill
    \begin{subfigure}[t]{0.32\textwidth}
        \centering
        \includegraphics[width=\textwidth]{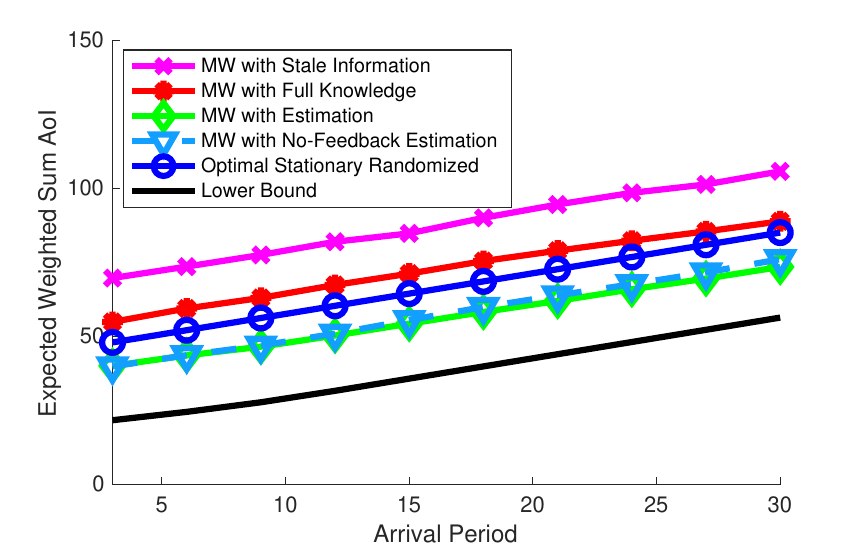}
        \caption{Periodic packet generation processes}
        \label{fig:performance varying interarrival time Per}
    \end{subfigure}
    \caption{Simulation results of networks with varying expected inter-generation periods $\mathbb{E}[X_i]$ with $N = 8,K = 2,$, $\boldsymbol{\alpha}=[4, 3, 2, 1, 5, 4, 1, 2]$, $ p^S_i = i/N, \forall i$, $ p^D_i = 0.8, \forall i$ and $\theta_i=5, \forall i$. The inter-generation period follows uniform distributions in (a), geometric distributions in (b), and is constant in (c).}
   
\end{figure*}

% \begin{figure*}[ht]
%     \centering
%     \begin{subfigure}[t]{0.48\textwidth}
%         \centering
%         \includegraphics[width=\textwidth]{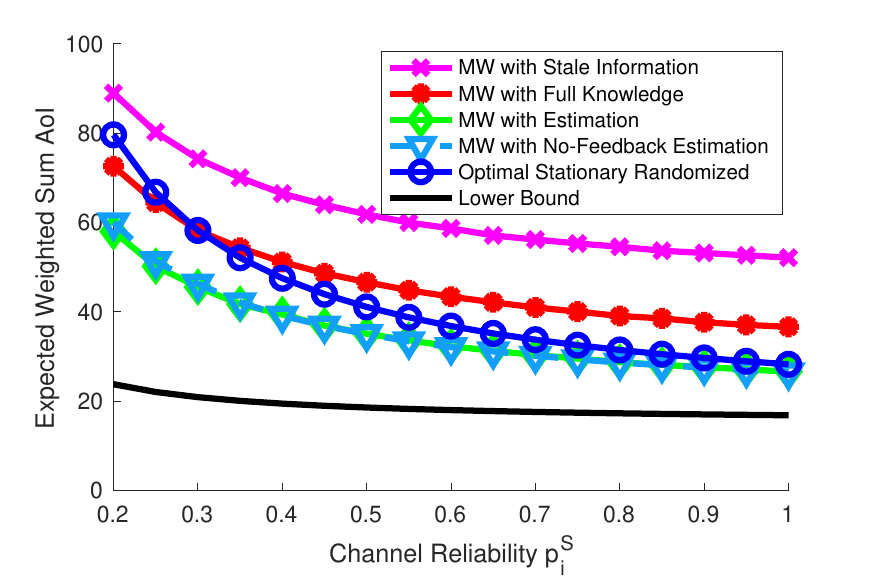}
%         \caption{EWSAoI varying channel reliability $p^S_ip^D_i^S$}
%         \label{fig:performance varying channel reliability pS}
%     \end{subfigure}
%     \hfill
%     \begin{subfigure}[t]{0.48\textwidth}
%         \centering
%         \includegraphics[width=\textwidth]{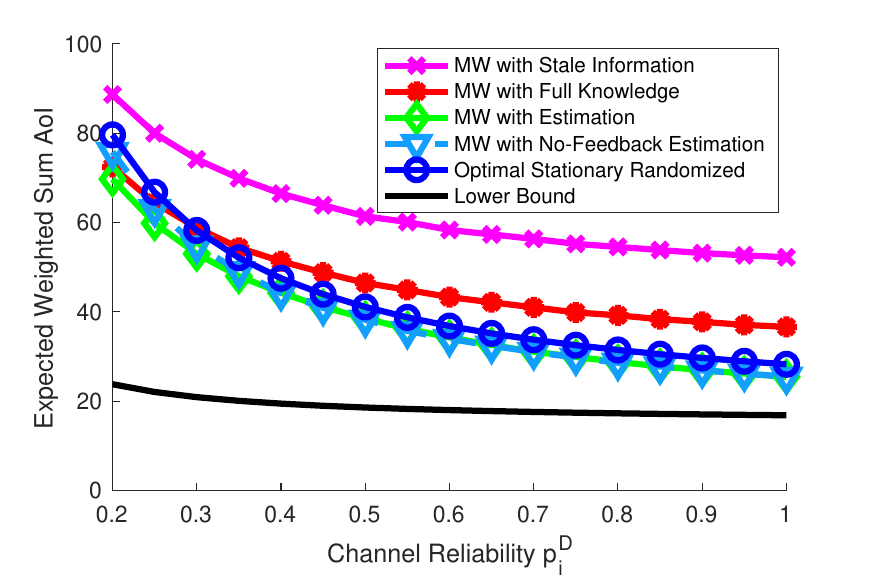}
%         \caption{EWSAoI varying channel reliability $p_i^D$}
%         \label{fig:performance varying channel reliability pD}
%     \end{subfigure}
%     \caption{Simulation of network varying channel reliability with $N = 8,K = 2,$  $\boldsymbol{\alpha}=[4, 3, 2, 1, 5, 4, 1, 2]$ and $ X_i \sim U[2,4]$. The channel reliability $ p^D_i$ in fig.\ref{fig:performance varying channel reliability pS} is given by $ p^D_i = 0.8, \forall i$, the channel reliability $ p^S_i$ in fig.\ref{fig:performance varying channel reliability pD} is given by $ p^D_i = 0.8, \forall i$;
%     }
% \end{figure*}
% \begin{figure}
%     \centering
%     \includegraphics[width=0.48\textwidth]{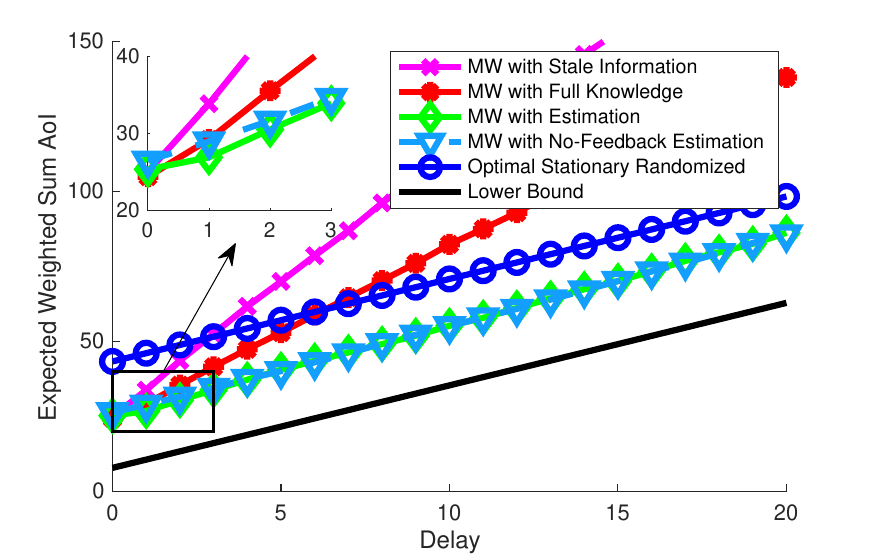}
%     \caption{Simulation of networks with  $N = 8,K = 2,$  $\boldsymbol{\alpha}=[4, 3, 2, 1, 5, 4, 1, 2]$ , $ X_i \sim U[2,4]$ and $ p^S_i = i/N, p^D_i = 0.8, \forall i$ varying transmission delay $\theta_i$. }
%     \label{fig:performance varying delay}
% \end{figure}

\begin{figure*}[ht]
    \centering
    \begin{minipage}[t]{0.65\textwidth}
        \centering
        \begin{subfigure}[t]{0.48\textwidth}
            \centering
            \includegraphics[width=\textwidth]{Figure5_pS.pdf}
            \caption{EWSAoI varying channel reliabilities $p_i^S$}
            \label{fig:performance varying channel reliability pS}
        \end{subfigure}
        \hfill
        \begin{subfigure}[t]{0.48\textwidth}
            \centering
            \includegraphics[width=\textwidth]{Figure6_pD.pdf}
            \caption{EWSAoI varying channel reliabilities $p_i^D$}
            \label{fig:performance varying channel reliability pD}
        \end{subfigure}
        \caption{Simulation results of networks with varying channel reliabilities with $N = 8, K = 2$, $\boldsymbol{\alpha}=[4, 3, 2, 1, 5, 4, 1, 2]$, $X_i \sim U[2,4]$ and $\theta_i=5, \forall i$. The channel reliabilities $p^D_i$ in (a) is given by $p^D_i = 0.8, \forall i$, and the channel reliabilities $p^S_i$ in (b) is given by $p^S_i = 0.8, \forall i$.}
        \label{fig:performance varying channel reliability}
    \end{minipage}
    \hfill
    \begin{minipage}[t]{0.32\textwidth}
        \centering
        \includegraphics[width=\textwidth]{Figure7.pdf}
        \caption{Simulation results of networks with varying transmission delay $\theta_i$ with $N = 8, K = 2$, $\boldsymbol{\alpha}=[4, 3, 2, 1, 5, 4, 1, 2]$, $X_i \sim U[2,4]$ , $p^S_i = i/N, \forall i$, $p^D_i = 0.8, \forall i$ and $\theta_i=5, \forall i$.}
        \label{fig:performance varying delay}
    \end{minipage}
\end{figure*}
\section{Simulation Results}\label{sec:simulation}

In this section, we evaluate the performance of various scheduling policies in terms of their EWSAoI. We compare the following policies:
(i) Optimal Randomized Policy; 
(ii) Max-Weight Policy with Estimation (MW-E), where $\omega_i=\theta_i, \forall i$; 
(iii) Max-Weight Policy with Estimation in the case of no Feedback (MW-EnF), where $\omega_i\rightarrow \infty, \forall i$; 
(iv) Max-Weight Policy with Full Knowledge (MW-F), where the BS knows both the AoI at the destinations $h_i(t)$ and the system times at the sources $z_i(t)$ in real-time. In each slot $t$, MW-F selects the K sources with highest values of $\beta_i p^S_ip^D_i \left(h_i(t) - z_i(t)-\theta_i\right)$; 
(v) Max-Weight Policy with Stale Information (MW-S), where instead of estimating AoI and system times, the BS uses delayed AoI information and $\mathbb{E}[z_i] = 1/\lambda_i-1$. In each slot $t$, MW-S selects the K sources with highest values of $\beta_i p^S_ip^D_i \left(h_i(t-\omega_i) - \mathbb{E}[z_i]-\theta_i\right)$. 
We compare the performance of these policies with the lower bound derived in Sec.~\ref{sec:lower_bound}. %for any packet generation processes.
The performance of the Optimal Randomized Policy is computed using the expressions in Section IV, while the performances of  MW-E, MW-EnF, MW-F, and MW-S policies are averaged over 10 simulation runs, each with a time horizon of $T = 1 \times 10^6$ slots.

In Figs.~\ref{fig:performance varying interarrival time uni}, \ref{fig:performance varying interarrival time ber}, and \ref{fig:performance varying interarrival time Per}, we simulate networks with \(N = 8\) sources with priorities \(\boldsymbol{\alpha}:=[\alpha_1,\alpha_2,\ldots,\alpha_N]=[4, 3, 2, 1, 5, 4, 1, 2]\). The channel reliabilities are \(p^S_i = i/N,\forall i\), and \(p^D_i = 0.8, \forall i\), the maximum number of transmissions in each slot set to $K = 2$, transmission delays set to $\theta_i=5, \forall i$. In Fig. \ref{fig:performance varying interarrival time uni}, the inter-generation period \(X_i\) follows a uniform distribution with \(U[2x, 4x]\), where \(x \in \{1, 2, \ldots, 10\}\). In Fig. \ref{fig:performance varying interarrival time ber}, the inter-generation period \(X_i\) follows a geometric distribution with parameter \(\lambda_i\in \{1/3,1/6,\ldots,1/30\},\forall i\). %, i.e., Bernoulli packet generation processes. 
In Fig. \ref{fig:performance varying interarrival time Per}, the inter-generation period is constant with \(X_i\in \{3,6,\ldots,30\},\forall i\). % for all \(i\) is a constant, i.e., periodic packet generation process.

In Figs. \ref{fig:performance varying channel reliability pS} and \ref{fig:performance varying channel reliability pD}, we simulate networks with $N = 8$ sources, priorities $\boldsymbol{\alpha}=[4, 3, 2, 1, 5, 4, 1, 2]$, maximum number of transmissions in each slot set to $K = 2$, transmission delays set to $\theta_i=5, \forall i$, and inter-generation periods $X_i$ that follow $U[2, 4]$. The channel reliabilities vary as $p^S_i \in \{0.2, 0.25, ..., 1\}, \forall i$, with $p^D_i = 0.8, \forall i$, in Fig. \ref{fig:performance varying channel reliability pS}, and vary as $p^D_i \in \{0.2, 0.25, ..., 1\}, \forall i$, with $p^S_i = 0.8, \forall i$, in Fig. \ref{fig:performance varying channel reliability pD}. 

In Fig. \ref{fig:performance varying delay}, we simulate networks with $N = 8$ sources, priorities $\boldsymbol{\alpha}=[4, 3, 2, 1, 5, 4, 1, 2]$, and inter-generation periods $X_i$ that follow $U[2, 4]$. The channel reliabilities are $p^S_i =i/N, p^D_i = 0.8, \forall i$. The maximum number of transmissions in each slot is set to $K = 2$ and the transmission delay varies as $\theta_i \in \{1, 2, ..., 20\}, \forall i$. 
%The performance of the randomized policies is computed using the expressions in Section IV, while the performance of the MW-E, MW-EnF, MW-F and MW-S policies are averaged over 10 simulation runs, each with a time horizon of $T = 1 \times 10^6$ slots.

% (i) Optimal Randomized Policy; 
% (ii) Max-Weight Policy with Estimation (MW-E), where $\omega_i=\theta_i, \forall i$; 
% (iii) Max-Weight Policy with Estimation in the case of no Feedback (MW-EnF), where $\omega_i\rightarrow \infty, \forall i$; 
% (iv) Max-Weight Policy with Full Knowledge (MW-F), where the BS knows both the AoI at the destinations $h_i(t)$ and the system times at the sources $z_i(t)$ in real-time. In each slot $t$, MW-F selects the K sources with highest values of $\beta_i p^S_ip^D_i \left(h_i(t) - z_i(t)-\theta_i\right)$; 
% (v) Max-Weight Policy with Stale Information (MW-S), where instead of estimating AoI and system times, the BS uses delayed AoI information and $\mathbb{E}[z_i] = 1/\lambda_i-1$.

\begin{remark}
Numerical results show that Max-Weight Policies leveraging MMSE estimation (namely, MW-E and MW-EnF) outperform the Optimal Randomized Policy in every simulated scenario. 
%Moreover, they show that  MW-E and MW-EnF  approach the performance of MW-F when $\theta_i=0, \forall i$, regardless of channel reliabilities and packet generation processes. 
Interestingly, Max-Weight Policy with Full Knowledge of $h_i(t)$ falls behind MW-E, MW-EnF, and the Optimal Randomized Policy in many scenarios. 
This is because the delayed and unreliable channel between the BS and the destinations creates uncertainty regarding the reception of packets at the destinations. 
%Notice that the BS is at a vantage point with respect to the destinations. 
% Specifically, the BS knows in advance that packets were forwarded to the destinations. 
%Specifically, the BS probabilistically determines in advance whether the packets were forwarded to the destinations. 
Both MW-E and MW-EnF leverage their knowledge of channel reliabilities $p_i^D$ and the available information in the BS observation $\mathbb{O}(t)$ to estimate the future AoI $\hat{h}_i(t+\theta_i)$ at the destinations. In contrast, MW-F leverages its (impractical) real-time knowledge of $h_i(t)$. Critically, MW-F does not attempt to estimate the future AoI. 
%prevents the destination from immediately observing the transmission outcomes of the packets transmitted during slots $[t-\theta_i,t]$ at beginning of slot~$t$. 
\end{remark}

%Figs. \ref{fig:performance varying channel reliability pS} and \ref{fig:performance varying channel reliability pD} show that for the same overall channel reliabilities $p_i^S \times p^D_i$, the performance of MW-F, MW-S, and the Optimal Randomized Policy remain consistent. However, MW-E and MW-EnF achieve lower AoI when $p^D_i$ is high and $p^S_i$ is low, but higher AoI when $p^S_i$ is high and $p^D_i$ is low, because the BS can observe $c_i^S$ in real-time for estimation, while there is a delay in observing $c_i^D$, making $p^D_i$ more impactful on AoI for MW-E. 
%Additionally, the gap between the EWSAoI of MW-E and the randomized policy remains relatively stable and virtually independent of delay $\theta_i$, as only the transmission outcome affects the estimation when the delay is large enough. 
\begin{remark}
Numerical results also show that the performance of MW-E and MW-EnF are similar in every simulated scenario. This suggests that the MMSE estimators that do not leverage the feedback from the destinations provide sufficient accuracy for optimizing AoI. 
%are almost as accurate as MMSE estimators that leverage such feedback. 
Since the AoI is a (relatively) recent metric that is not currently measured in existing base stations, transmission scheduling algorithms that can optimize AoI without measuring it can be of particular interest for system implementation.
\end{remark}
%Figure \ref{fig:performance varying delay} shows that in the absence of feedback, MW-EnF can provide almost the same performance with MW-E, ensuring information freshness even when the BS schedules without any AoI information.
%This not only minimizes AoI but also reduces the resource overhead caused by sharing AoI knowledge in complex networks.

\section{Final Remarks}\label{sec:conclusion}

This paper studies the problem of optimizing scheduling decisions for minimizing AoI without relying on perfect knowledge of AoI. Specifically, we consider a network with sources that generate packets according to general renewal processes, a wireless BS that schedules multiple unreliable packet transmissions at every time slot $t$, and destinations with connections to the BS that are unreliable and delayed. 
%This paper investigates a network where a BS serves multiple sources to various destinations. The packets from each source arrive at the BS following an arbitrary renewal process. We address the problem of optimizing scheduling decisions concerning the EWSAoI of the network.
Our main contributions are as follows:
(i) We derive a lower bound on the achievable EWSAoI; 
(ii) We develop a Optimal Randomized Policy for arbitrary renewal packet generation processes; 
(iii) We obtain MMSE estimators of system time and AoI; 
(iv) We develop a Max-Weight Policy that leverages these MMSE estimators; 
(v) We evaluate the EWSAoI of the Optimal Randomized Policy and of the Max-Weight Policy both analytically and through simulations.
%Our results demonstrate that the performance of the Max-Weight Policy with MMSE estimation is close to the analytical lower bound and is virtually independent of transmission delays. Crucially, the MW-E can guarantee information freshness even when the BS schedules without direct AoI information. 
Interesting extensions include consideration of estimates of the packet generation processes at the sources, consideration of non-i.i.d. and correlated wireless channels, consideration of time-varying transmission/feedback delays, and experimental evaluation using software-defined radios.

\bibliographystyle{IEEEtran}
\bibliography{references}

% Generated by IEEEtran.bst, version: 1.14 (2015/08/26)
\begin{thebibliography}{10}
\providecommand{\url}[1]{#1}
\csname url@samestyle\endcsname
\providecommand{\newblock}{\relax}
\providecommand{\bibinfo}[2]{#2}
\providecommand{\BIBentrySTDinterwordspacing}{\spaceskip=0pt\relax}
\providecommand{\BIBentryALTinterwordstretchfactor}{4}
\providecommand{\BIBentryALTinterwordspacing}{\spaceskip=\fontdimen2\font plus
\BIBentryALTinterwordstretchfactor\fontdimen3\font minus \fontdimen4\font\relax}
\providecommand{\BIBforeignlanguage}[2]{{%
\expandafter\ifx\csname l@#1\endcsname\relax
\typeout{** WARNING: IEEEtran.bst: No hyphenation pattern has been}%
\typeout{** loaded for the language `#1'. Using the pattern for}%
\typeout{** the default language instead.}%
\else
\language=\csname l@#1\endcsname
\fi
#2}}
\providecommand{\BIBdecl}{\relax}
\BIBdecl

\bibitem{kaul2011minimizing}
S.~Kaul, M.~Gruteser, V.~Rai, and J.~Kenney, ``Minimizing age of information in vehicular networks,'' in \emph{IEEE SECON}, 2011, pp. 350--358.

\bibitem{dayal2023adaptive}
A.~Dayal, V.~K. Shah, H.~S. Dhillon, and J.~H. Reed, ``Adaptive {RRI} selection algorithms for improved cooperative awareness in decentralized nr-v2x,'' \emph{IEEE Access}, vol.~11, pp. 134\,575--134\,588, 2023.

\bibitem{AoI_V}
C.~Guo, X.~Wang, L.~Liang, and G.~Y. Li, ``Age of information, latency, and reliability in intelligent vehicular networks,'' \emph{IEEE Network}, vol.~37, no.~6, pp. 109--116, 2023.

\bibitem{AoIUAV1}
H.~Hu, K.~Xiong, G.~Qu, Q.~Ni, P.~Fan, and K.~B. Letaief, ``{AoI}-minimal trajectory planning and data collection in uav-assisted wireless powered {IoT} networks,'' \emph{IEEE Internet of Things Journal}, vol.~8, no.~2, pp. 1211--1223, 2021.

\bibitem{9637803}
B.~Choudhury, V.~K. Shah, A.~Ferdowsi, J.~H. Reed, and Y.~T. Hou, ``{AoI}-minimizing scheduling in uav-relayed {IoT} networks,'' in \emph{IEEE 18th International Conference on Mobile Ad Hoc and Smart Systems (MASS)}, 2021, pp. 117--126.

\bibitem{yu2022age}
B.~Yu, X.~Chen, and Y.~Cai, ``Age of information for the cellular internet of things: Challenges, key techniques, and future trends,'' \emph{IEEE Communications Magazine}, vol.~60, no.~12, pp. 20--26, 2022.

\bibitem{beytur2020towards}
H.~B. Beytur, S.~Baghaee, and E.~Uysal, ``Towards {AoI}-aware smart {IoT} systems,'' in \emph{International Conference on Computing, Networking and Communications (ICNC)}, 2020, pp. 353--357.

\bibitem{abd2019role}
M.~A. Abd-Elmagid, N.~Pappas, and H.~S. Dhillon, ``On the role of age of information in the internet of things,'' \emph{IEEE Communications Magazine}, vol.~57, no.~12, pp. 72--77, 2019.

\bibitem{kadota2019scheduling}
I.~Kadota, A.~Sinha, and E.~Modiano, ``Scheduling algorithms for optimizing age of information in wireless networks with throughput constraints,'' \emph{IEEE/ACM Transactions on Networking}, vol.~27, no.~4, pp. 1359--1372, 2019.

\bibitem{AoIDQN}
J.~Song, D.~Gunduz, and W.~Choi, ``Optimal scheduling policy for minimizing age of information with a relay,'' \emph{IEEE Internet of Things Journal}, 2023.

\bibitem{li2024aequitas}
C.~Li, Q.~Liu, Y.~T. Hou, W.~Lou, and S.~Kompella, ``Aequitas: A {5G} scheduler for minimizing outdated information in {IoT} networks,'' \emph{IEEE Internet of Things Journal}, 2024.

\bibitem{ji2024age}
Y.~Ji, Y.~Lu, X.~Xu, and X.~Huang, ``Age-optimal packet scheduling with resource constraint and feedback delay,'' \emph{IEEE Transactions on Communications}, 2024.

\bibitem{kadota2019minimizing}
I.~Kadota and E.~Modiano, ``Minimizing the age of information in wireless networks with stochastic arrivals,'' \emph{IEEE Transactions on Mobile Computing}, vol.~20, no.~3, pp. 1173--1185, 2019.

\bibitem{zakeri2023minimizing}
A.~Zakeri, M.~Moltafet, M.~Leinonen, and M.~Codreanu, ``Minimizing the {AoI} in resource-constrained multi-source relaying systems: Dynamic and learning-based scheduling,'' \emph{IEEE Transactions on Wireless Communications}, vol.~23, no.~1, pp. 450--466, 2023.

\bibitem{Whittle_Zhifeng}
Z.~Tang, Z.~Sun, N.~Yang, and X.~Zhou, ``Whittle index-based scheduling policy for minimizing the cost of age of information,'' \emph{IEEE Communications Letters}, vol.~26, no.~1, pp. 54--58, 2022.

\bibitem{Whittle_YuPin}
Y.-P. Hsu, ``Age of information: Whittle index for scheduling stochastic arrivals,'' in \emph{Proc. IEEE ISIT}, 2018, pp. 2634--2638.

\bibitem{kadota2018scheduling}
I.~Kadota, A.~Sinha, E.~Uysal-Biyikoglu, R.~Singh, and E.~Modiano, ``Scheduling policies for minimizing age of information in broadcast wireless networks,'' \emph{IEEE/ACM Transactions on Networking}, vol.~26, no.~6, pp. 2637--2650, 2018.

\bibitem{liu2023optimizing}
J.~Liu, R.~Zhang, A.~Gong, and H.~Chen, ``Optimizing age of information in wireless uplink networks with partial observations,'' \emph{IEEE Transactions on Communications}, vol.~71, no.~7, pp. 4105--4118, 2023.

\bibitem{AoI5G}
C.~Li, Y.~Huang, S.~Li, Y.~Chen, B.~A. Jalaian, Y.~T. Hou, W.~Lou, J.~H. Reed, and S.~Kompella, ``Minimizing {AoI} in a {5G}-based {IoT} network under varying channel conditions,'' \emph{IEEE Internet of Things Journal}, vol.~8, no.~19, pp. 14\,543--14\,558, 2021.

\bibitem{liu2024optimizing}
J.~Liu, Q.~Wang, and H.~Chen, ``Optimizing information freshness in uplink multiuser mimo networks with partial observations,'' \emph{arXiv preprint arXiv:2401.02218}, 2024.

\bibitem{tang2020minimizing}
H.~Tang, J.~Wang, L.~Song, and J.~Song, ``Minimizing age of information with power constraints: Multi-user opportunistic scheduling in multi-state time-varying channels,'' \emph{IEEE Journal on Selected Areas in Communications}, vol.~38, no.~5, pp. 854--868, 2020.

\bibitem{AoIsurvey}
R.~D. Yates, Y.~Sun, D.~R. Brown, S.~K. Kaul, E.~Modiano, and S.~Ulukus, ``Age of information: An introduction and survey,'' \emph{IEEE Journal on Selected Areas in Communications}, vol.~39, no.~5, pp. 1183--1210, 2021.

\bibitem{saurav2023scheduling}
K.~Saurav and R.~Vaze, ``Scheduling to minimize age of information with multiple sources,'' \emph{IEEE Journal on Selected Areas in Information Theory}, 2023.

\bibitem{AoIOFDMA}
B.~Han, Y.~Zhu, Z.~Jiang, M.~Sun, and H.~D. Schotten, ``Fairness for freshness: Optimal age of information based {OFDMA} scheduling with minimal knowledge,'' \emph{IEEE Transactions on Wireless Communications}, vol.~20, no.~12, pp. 7903--7919, 2021.

\bibitem{Vishrant_Multihop}
V.~Tripathi, R.~Talak, and E.~Modiano, ``Information freshness in multihop wireless networks,'' \emph{IEEE/ACM Transactions on Networking}, vol.~31, no.~2, pp. 784--799, 2023.

\bibitem{fountoulakis2023scheduling}
E.~Fountoulakis, T.~Charalambous, A.~Ephremides, and N.~Pappas, ``Scheduling policies for {AoI} minimization with timely throughput constraints,'' \emph{IEEE Transactions on Communications}, vol.~71, no.~7, pp. 3905--3917, 2023.

\bibitem{AoIopenloop}
N.~Akar, S.~Liyanaarachchi, and S.~Ulukus, ``Scalable cyclic schedulers for age of information optimization in large-scale status update systems,'' \emph{arXiv preprint arXiv:2401.04904}, 2024.

\bibitem{talak2018optimizing}
R.~Talak, S.~Karaman, and E.~Modiano, ``Optimizing information freshness in wireless networks under general interference constraints,'' in \emph{Proc. of the Eighteenth ACM International Symposium on Mobile Ad Hoc Networking and Computing}, 2018, pp. 61--70.

\bibitem{chen2024minimizing}
Y.~Chen and A.~Ephremides, ``Minimizing age of incorrect information over a channel with random delay,'' \emph{IEEE/ACM Transactions on Networking}, 2024.

\bibitem{ayan2019age}
O.~Ayan, M.~Vilgelm, M.~Kl{\"u}gel, S.~Hirche, and W.~Kellerer, ``Age-of-information vs. value-of-information scheduling for cellular networked control systems,'' in \emph{Proc. of the 10th ACM/IEEE ICCPS}, 2019, pp. 109--117.

\bibitem{sun2018sources}
Y.~Sun, E.~Uysal-Biyikoglu, and S.~Kompella, ``Age-optimal updates of multiple information flows,'' in \emph{Proc. IEEE INFOCOM Workshop}, 2018.

\bibitem{tripathi2019age}
V.~Tripathi, R.~Talak, and E.~Modiano, ``Age of information for discrete time queues,'' \emph{arXiv preprint arXiv:1901.10463}, 2019.

\bibitem{AoIM/M/1}
S.~Kaul, R.~Yates, and M.~Gruteser, ``Real-time status: How often should one update?'' in \emph{Proc. IEEE INFOCOM}, 2012, pp. 2731--2735.

\bibitem{AoI_general}
Y.~Inoue, H.~Masuyama, T.~Takine, and T.~Tanaka, ``A general formula for the stationary distribution of the age of information and its application to single-server queues,'' \emph{IEEE Transactions on Information Theory}, vol.~65, no.~12, pp. 8305--8324, 2019.

\bibitem{stochetic}
R.~G. Gallager, \emph{Stochastic processes: theory for applications}.\hskip 1em plus 0.5em minus 0.4em\relax Cambridge University Press, 2013.

\bibitem{neely2022stochastic}
M.~Neely, \emph{Stochastic network optimization with application to communication and queueing systems}.\hskip 1em plus 0.5em minus 0.4em\relax Springer Nature, 2022.

\end{thebibliography}

\end{document}